\newtheorem{theorem}{\bf Theorem}
\newtheorem{lemma}[theorem]{\bf Lemma}
\newtheorem{proposition}[theorem]{\bf Proposition}
\newtheorem{corollary}[theorem]{\bf Corollary}
\newtheorem{remark}[theorem]{\bf Remark}
\newtheorem{definition}[theorem]{\bf Definition}
\newtheorem{example}[theorem]{\bf Example}
\numberwithin{equation}{section}
\begin{document}
\title{On some cryptographic properties of Boolean functions and their second-order derivatives}
\author{A. Musukwa, M. Sala and M. Zaninelli}
\date{}
\maketitle              
\begin{center}
	University of Trento, Via Sommarive, 14, 38123 Povo, Trento, Italy\ \\\{augustinemusukwa, maxsalacodes, zaninelli.marco21\}@gmail.com
\end{center}
\begin{abstract}
\noindent In this paper some cryptographic properties of Boolean functions, including weight, balancedness and nonlinearity, are studied, particularly focusing on splitting functions and cubic Boolean functions. Moreover, we present some quantities derived from the behaviour of second-order derivatives which allow us to determine whether a quadratic or cubic function is APN. \\ 

\noindent {\bf Keywords:} Boolean functions;  Nonlinearity; APN functions\\
\noindent {\bf MSC 2010:} 06E30, 94A60, 14G50
\end{abstract}
\section{Introduction}
Boolean functions are widely studied and applied in coding theory, cryptography and other fields. The properties of (vectorial) Boolean functions play a critical role in cryptography, particularly in the design of symmetric key algorithms in block cipher and nonlinear filters and combiners in stream ciphers. To find a function with good properties for a cryptosystem to possess robust resistance against most of the known attacks, a lot of effort is required. Cryptographic Boolean functions should satisfy various criteria simultaneously, for instance balancedness, high nonlinearity and good autocorrelation properties, to particularly resist linear cryptanalysis and differential cryptanalysis \cite{Tang}. 

This paper discusses some cryptographic properties of Boolean functions and is organised as follows. Section~\ref{sect-2} reports some known results which form a basis for what is being studied in this paper. In Section \ref{sect-3},  we show how the weight of any Boolean function can be related to the weights of some other functions at a lower dimension, we prove some results on weight and balancedness of "splitting" functions and a special class of cubic Boolean functions. In Section~\ref{sect-4}, we give an inequality relation which relate the nonlinearity of any Boolean function to the nonlinearity of some other functions at a lower dimension and nonlinearity of some splitting functions is proved. Finally, in Section~\ref{sect-5}, a parameter of a Boolean function based on its second-order derivatives is introduced and has been used for characterization of quadratic and cubic APN functions.

\section{Preliminaries}\label{sect-2}
In this section we report some definitions and results which we use in our work. For more details, the reader is referred to \cite{Beth,Bra,Car1,Chee,Mac,Wu}.

We denote the field of two elements, $0$ and $1$, by $\mathbb{F}$.  Any vector in the vector space $\mathbb{F}^n$ is denoted as $v$ (not as $\mathbf{v}$ or $\vec{v}$). The size of a set $A$ is denoted as $|A|$.

A {\em Boolean function (Bf)} is any function $f$ from $\mathbb{F}^n$ to $\mathbb{F}$ and a {\em vectorial Boolean function (vBf)} is any function $F$ from $\mathbb{F}^n$ to $\mathbb{F}^m$, $n,m\in\mathbb{N}$. However, in the present paper we only consider vBf's from $\mathbb{F}^n$ to $\mathbb{F}^n$. We use algebraic normal form (ANF for short), to represent the Bf's, which is the $n$-variable polynomial representation over $\mathbb{F}$ given by \[f(x_1,...,x_n)=\sum_{I\subseteq \mathcal{P}}a_I\left(\prod_{i\in I}x_i\right)\] where $\mathcal{P}=\{1,...,n\}$ and $a_I\in \mathbb{F}$. The {\em algebraic degree} or simply {\em degree} of $f$ (denoted by $\deg(f)$) is  \(\max_{a_I\ne 0}|I|.\) The set of all Bf's is denoted by $B_n$.

For a Bf $f$, we say that $f$ is {\em linear} if $\deg(f)\le 1$ and $f(0)=0$, {\em affine} if $\deg(f)\le 1$, {\em quadratic} if $\deg(f)=2$ and {\em cubic} if $\deg(f)=3$. The set of all affine functions is denoted by $A_n$. Given a vBf $F=(f_1,...,f_n)$, the functions  $f_1,...,f_n$ are called {\em coordinate functions} and the functions $\lambda\cdot F$, where $\lambda\in \mathbb{F}^n\setminus\{0\}$ and "$\cdot$" is a dot product, are called {\em component functions} and we write $F_\lambda=\lambda\cdot F$. The degree of a vBf $F$ is given by $\deg(F)=\max_{\lambda\in \mathbb{F}^n\setminus\{0\}}\deg(F_\lambda)$. We say that $F$ is {\em quadratic} if $\deg(F)=2$ and {\em cubic} if $\deg(F)=3$. If all components of a vBf $F$ are quadratic, we call $F$ a {\em pure quadratic}. 

For $m<n$, if $f$ is in $B_n$ and depends only on $m$ variables, then we denote by $f_{\restriction\mathbb{F}^m}$ its restriction to these $m$ variables. Clearly, $f_{\restriction\mathbb{F}^m}$ is in $B_m$. The {\em Hamming weight} of $f$ is given by $\mathrm{w}(f)=|\{x\in \mathbb{F}^n\mid f(x)=1\}|$. We say that $f$ is {\em balanced} if $\mathrm{w}(f)=2^{n-1}$. All non-constant affine functions are balanced. The {\em distance} between $f$ and $g$ is $d(f,g)=\mathrm{w}(f+g)$ and the {\em nonlinearity} of $f$  is $\mathcal{N}(f)=\min_{\alpha \in A_n}d(f,\alpha)$.

We define the {\em Walsh transform} of $f$, the function $\mathcal{W}_f$ from $\mathbb{F}^n$ to $\mathbb{Z}$, as \[\mathcal{W}_f(a)=\sum_{x\in\mathbb{F}^n}(-1)^{f(x)+a\cdot x}\,,\] for all $a \in \mathbb{F}^n$. Let $\mathcal{
	L}(f)=\max\limits_{a\in \mathbb{F}^n}|\mathcal{W}_f(a)|$.
We define $\mathcal{F}(f)$ as
\[\mathcal{F}(f)=\mathcal{W}_f(0)=\sum_{x\in \mathbb{F}^n}(-1)^{f(x)}=2^n-2\mathrm{w}(f).\]
Observe that $f$ is balanced if and only if $\mathcal{F}(f)=0$.
 
The nonlinearity of a Bf $f$ can also be given by \(\mathcal{N}(f)=2^{n-1}-\frac{1}{2}\mathcal{L}(f).\) A Bf $f$ on $n$ variables is called {\em bent} if \(\mathcal{N}(f)=2^{n-1}-2^{\frac{n}{2}-1}\) (this happens only for $n$ even). The lowest possible value for $\mathcal{L}(f)$ is $2^{\frac{n}{2}}$ and this bound is achieved for bent functions (and only them).

Let $a,b\in \mathbb{F}^n$. The {\em first-order derivative} of $f\in B_n$ at $a$ is defined by \[D_af(x)=f(x+a)+f(x)\] and its {\em second-order derivative} at $a$ and $b$ is \[D_bD_af(x)=f(x)+f(x+b)+f(x+a)+f(x+a+b).\]

\begin{theorem}\label{bent-thm}
	A Bf $f$ on $n$ variables is bent if and only if $D_af$ is balanced for any nonzero $a\in\mathbb{F}^n$.
\end{theorem}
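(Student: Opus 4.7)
The plan is to derive the standard identity linking $\mathcal{W}_f(a)^2$ to the values $\mathcal{F}(D_b f)$ and then exploit Fourier inversion on $\mathbb{F}^n$. Concretely, I will expand
\[\mathcal{W}_f(a)^2=\sum_{x,y\in\mathbb{F}^n}(-1)^{f(x)+f(y)+a\cdot(x+y)}\]
and substitute $y=x+b$ to rewrite the inner sum as $\sum_x(-1)^{D_bf(x)}=\mathcal{F}(D_bf)$, which gives the key identity
\[\mathcal{W}_f(a)^2=\sum_{b\in\mathbb{F}^n}(-1)^{a\cdot b}\,\mathcal{F}(D_bf).\]
This reduces the theorem to a statement about a single Fourier transform pair.

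For the $(\Leftarrow)$ direction, assume $D_bf$ is balanced for every nonzero $b$, so $\mathcal{F}(D_bf)=0$ for $b\neq0$. Since $D_0f\equiv0$, we have $\mathcal{F}(D_0f)=2^n$, and the identity collapses to $\mathcal{W}_f(a)^2=2^n$ for every $a\in\mathbb{F}^n$. Hence $\mathcal{L}(f)=2^{n/2}$, forcing $n$ to be even and $\mathcal{N}(f)=2^{n-1}-2^{n/2-1}$, i.e.\ $f$ is bent.

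For the $(\Rightarrow)$ direction, assume $f$ is bent, so $\mathcal{W}_f(a)^2=2^n$ for all $a$. I would then invert the Fourier relation above: from $\mathcal{W}_f(a)^2=\sum_b(-1)^{a\cdot b}\mathcal{F}(D_bf)$, using the orthogonality relation $\sum_a(-1)^{a\cdot(b+b')}=2^n\delta_{b,b'}$, we recover
\[\mathcal{F}(D_bf)=\frac{1}{2^n}\sum_{a\in\mathbb{F}^n}(-1)^{a\cdot b}\,\mathcal{W}_f(a)^2=\sum_{a\in\mathbb{F}^n}(-1)^{a\cdot b},\]
which vanishes for $b\neq0$. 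Thus $D_bf$ is balanced for every nonzero $b$.

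The main obstacle (really the only non-bookkeeping step) is producing the identity $\mathcal{W}_f(a)^2=\sum_b(-1)^{a\cdot b}\mathcal{F}(D_bf)$ cleanly; once this is in hand both implications are immediate from elementary character-sum orthogonality. I expect no deeper difficulty, since the rest consists of substituting the two extreme cases ($\mathcal{F}(D_bf)=0$ for $b\neq0$ versus $\mathcal{W}_f(a)^2=2^n$ constant in $a$) into the same identity read in opposite directions.
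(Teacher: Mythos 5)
Your argument is correct and is the standard one; note that the paper itself states this theorem in its Preliminaries as a known result and gives no proof, so there is nothing to compare against line by line. Your key identity $\mathcal{W}_f(a)^2=\sum_{b\in\mathbb{F}^n}(-1)^{a\cdot b}\,\mathcal{F}(D_bf)$ is derived correctly by the substitution $y=x+b$, and both directions then follow by reading this Fourier pair in the two directions, exactly as you say. The one step you gloss over is in the $(\Rightarrow)$ direction: the paper's definition of bent only says $\mathcal{L}(f)=\max_a|\mathcal{W}_f(a)|=2^{n/2}$, and you jump from that to ``$\mathcal{W}_f(a)^2=2^n$ for \emph{all} $a$.'' That flatness of the Walsh spectrum is true, but it requires Parseval's relation $\sum_{a\in\mathbb{F}^n}\mathcal{W}_f(a)^2=2^{2n}$: since each term is at most $2^n$ and the $2^n$ terms must average to $2^n$, every term equals $2^n$. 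Add that one line and the proof is complete; everything else, including the orthogonality computation $\sum_a(-1)^{a\cdot(b+b')}=2^n\delta_{b,b'}$ used for the inversion and the observation that $\mathcal{F}(D_0f)=2^n$, is sound.
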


For $n$ odd, a Bf $f$ is called {\em semi-bent} if $\mathcal{N}(f)=2^{n-1}-2^{\frac{n-1}{2}}$. A vBf $F$ in odd dimension is {\em almost-bent (AB)} if all its components are  semi-bent.

\begin{theorem}\label{permuatation}
	Let $F$ be a vBf. Then $F$ is a {\em permutation} if and only if all components are balanced.
\end{theorem}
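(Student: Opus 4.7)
The plan is to translate both "permutation" and "component balancedness" into statements about the character sums $\sum_{x}(-1)^{\lambda\cdot F(x)}$ and then exploit the invertibility of the Fourier/Hadamard transform over $\mathbb{F}^n$. By the observation in the preliminaries, the component $F_\lambda$ is balanced exactly when $\mathcal{F}(F_\lambda)=\sum_{x\in\mathbb{F}^n}(-1)^{\lambda\cdot F(x)}=0$. So the goal reduces to showing that $F$ is a bijection on $\mathbb{F}^n$ if and only if $\sum_{x}(-1)^{\lambda\cdot F(x)}=0$ for every nonzero $\lambda\in\mathbb{F}^n$.

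For the forward direction, I would assume $F$ is a permutation and simply re-index the sum: as $x$ runs through $\mathbb{F}^n$, so does $y=F(x)$, hence
\[\sum_{x\in\mathbb{F}^n}(-1)^{\lambda\cdot F(x)}=\sum_{y\in\mathbb{F}^n}(-1)^{\lambda\cdot y}=0\]
for any $\lambda\neq 0$, using the standard character-sum identity that $\sum_{y\in\mathbb{F}^n}(-1)^{\lambda\cdot y}$ vanishes unless $\lambda=0$.

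For the converse, I would count preimages. Set $N(y)=|F^{-1}(y)|$ for $y\in\mathbb{F}^n$; then for every $\lambda$,
\[\sum_{x\in\mathbb{F}^n}(-1)^{\lambda\cdot F(x)}=\sum_{y\in\mathbb{F}^n}N(y)(-1)^{\lambda\cdot y}.\]
By hypothesis this equals $0$ for each $\lambda\neq 0$, while for $\lambda=0$ it equals $\sum_{y}N(y)=2^n$. Applying Fourier inversion over $\mathbb{F}^n$ (equivalently, multiplying by $(-1)^{\lambda\cdot z}$ and summing over $\lambda$, using $\sum_{\lambda}(-1)^{\lambda\cdot(y-z)}=2^n$ if $y=z$ and $0$ otherwise) yields $N(z)=1$ for every $z\in\mathbb{F}^n$. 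Hence $F$ is injective, and therefore a permutation of $\mathbb{F}^n$.

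The only genuinely delicate step is the converse, but it is not really an obstacle: the vanishing of all nontrivial Walsh values of the indicator-count function $N$ forces $N$ to be constant, and the constant is pinned down by the total count $\sum_y N(y)=2^n$. I would make sure to state the orthogonality relation $\sum_{\lambda\in\mathbb{F}^n}(-1)^{\lambda\cdot w}=2^n\delta_{w,0}$ explicitly, since it is the engine behind the inversion and it also covers the $\lambda=0$ case uniformly.
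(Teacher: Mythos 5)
Your proof is correct. Note that the paper states this theorem in the preliminaries as a known result and gives no proof of its own, so there is nothing to compare against; your argument is the standard one. Both directions are sound: the forward direction by re-indexing the character sum $\sum_{x}(-1)^{\lambda\cdot F(x)}$ along the bijection, and the converse by Fourier inversion applied to the preimage-count function $N(y)=|F^{-1}(y)|$, using the orthogonality relation $\sum_{\lambda\in\mathbb{F}^n}(-1)^{\lambda\cdot w}=2^n\delta_{w,0}$ to force $N\equiv 1$. This also matches the paper's convention that $F_\lambda$ is balanced iff $\mathcal{F}(F_\lambda)=\mathcal{W}_{F_\lambda}(0)=0$, so the translation step at the start is exactly right.
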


Two Bf's $f,g:\mathbb{F}^n\rightarrow \mathbb{F}$ are said to be {\em affine equivalent} if there exists an affinity $\varphi:\mathbb{F}^n\rightarrow \mathbb{F}^n$ such that $f=g\circ \varphi$. This relation is denoted by $\sim_A $ and we write $f\sim_A g$. Observe that $\sim_A$ is an equivalence relation. The following result is well-known.

\begin{proposition}\label{equivalence-properties}
	Let $f,g\in B_n$ be such that $f\sim_A g$ . Then $\mathrm{w}(f)=\mathrm{w}(g)$ and so $f$ is balanced $\iff g$ is balanced.
\end{proposition}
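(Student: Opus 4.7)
The plan is to exploit the fact that an affinity $\varphi:\mathbb{F}^n\to\mathbb{F}^n$ is by definition a bijection (it has the form $\varphi(x)=Ax+b$ with $A$ an invertible linear map), so that the substitution $x\mapsto\varphi(x)$ simply permutes the elements of $\mathbb{F}^n$. Since the Hamming weight is defined as the size of the preimage of $1$, it is invariant under any such permutation of the domain.

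Concretely, I would first write out the hypothesis $f=g\circ\varphi$ and then compute
\[
\mathrm{w}(f)=|\{x\in\mathbb{F}^n\mid f(x)=1\}|=|\{x\in\mathbb{F}^n\mid g(\varphi(x))=1\}|=|\varphi^{-1}(\{y\in\mathbb{F}^n\mid g(y)=1\})|.
\]
Because $\varphi$ is a bijection, the preimage of any subset $S\subseteq\mathbb{F}^n$ satisfies $|\varphi^{-1}(S)|=|S|$, so the last quantity equals $|\{y\in\mathbb{F}^n\mid g(y)=1\}|=\mathrm{w}(g)$. Equivalently, one can introduce the change of variable $y=\varphi(x)$ in the sum defining $\mathcal{F}(f)=\sum_{x}(-1)^{f(x)}$ to obtain $\mathcal{F}(f)=\mathcal{F}(g)$, from which $\mathrm{w}(f)=\mathrm{w}(g)$ follows via the identity $\mathcal{F}(h)=2^n-2\mathrm{w}(h)$ recalled in Section~\ref{sect-2}.

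For the second assertion, having established $\mathrm{w}(f)=\mathrm{w}(g)$, the equivalence is immediate from the definition of balancedness: $f$ is balanced iff $\mathrm{w}(f)=2^{n-1}$ iff $\mathrm{w}(g)=2^{n-1}$ iff $g$ is balanced.

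There is no real obstacle in this proof; the only point worth stating explicitly is the bijectivity of $\varphi$, which is built into the definition of affinity and is precisely what forces weights to be preserved. The argument would fail for a general (non-invertible) affine map, so invertibility is the one hypothesis one must not overlook.
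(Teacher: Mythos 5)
Your proof is correct: the paper states this proposition without proof (labelling it as well-known), and your argument via the bijectivity of the affinity $\varphi$ --- so that precomposition merely permutes $\mathbb{F}^n$ and preserves the size of the preimage of $1$ --- is the standard and complete justification. Your remark that invertibility of $\varphi$ is the essential hypothesis is exactly the right point to flag.
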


\begin{remark}\label{fourier-equivalence}
	Since, by Proposition \ref{equivalence-properties}, $\mathrm{w}(f)=\mathrm{w}(g)$ if $f\sim_A g$, then it also implies that $\mathcal{F}(f)=\mathcal{F}(g)$ as $\mathcal{F}(f)=2^n-2\mathrm{w}(f)$.
\end{remark}

\begin{proposition}\label{extended-Walsh}
	Let $f,g\in B_n$ be such that $f\sim_A g$. Then \[\{|\mathcal{W}_f(a)|\}_{a\in \mathbb{F}^n}=\{|\mathcal{W}_g(a)|\}_{a\in \mathbb{F}^n}.\] Moreover, we have $\mathcal{N}(f)=\mathcal{N}(g)$. 
\end{proposition}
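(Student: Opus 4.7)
The plan is to compute $\mathcal{W}_f(a)$ directly from the definition of affine equivalence and show that it agrees, up to sign and up to a linear change of the argument $a$, with a Walsh coefficient of $g$. The multiset identity will then follow because the change of argument will be a bijection of $\mathbb{F}^n$, and the nonlinearity statement will follow from the formula $\mathcal{N}(f)=2^{n-1}-\tfrac{1}{2}\mathcal{L}(f)$ recalled earlier in the preliminaries.

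Concretely, I would write $\varphi(x)=Ax+b$ with $A\in \mathrm{GL}(n,\mathbb{F})$ and $b\in \mathbb{F}^n$, so that $f(x)=g(Ax+b)$. Then I would start from
\[
\mathcal{W}_f(a)=\sum_{x\in\mathbb{F}^n}(-1)^{f(x)+a\cdot x}=\sum_{x\in\mathbb{F}^n}(-1)^{g(Ax+b)+a\cdot x}
\]
and perform the change of variable $y=Ax+b$, which is a bijection of $\mathbb{F}^n$. This gives $x=A^{-1}y+A^{-1}b$ (recall the characteristic is $2$), and therefore
\[
a\cdot x=a\cdot A^{-1}y+a\cdot A^{-1}b=(A^{-\top}a)\cdot y+a\cdot A^{-1}b.
\]
Substituting and pulling the constant term out of the sum, I would obtain
\[
\mathcal{W}_f(a)=(-1)^{a\cdot A^{-1}b}\,\mathcal{W}_g(A^{-\top}a),
\]
so that $|\mathcal{W}_f(a)|=|\mathcal{W}_g(A^{-\top}a)|$ for every $a$.

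From here the proof essentially finishes itself. Since $A^{-\top}$ is a bijection of $\mathbb{F}^n$, as $a$ runs over $\mathbb{F}^n$ so does $A^{-\top}a$, which proves the equality of multisets $\{|\mathcal{W}_f(a)|\}_{a}=\{|\mathcal{W}_g(a)|\}_{a}$. Taking maxima yields $\mathcal{L}(f)=\mathcal{L}(g)$, and the identity $\mathcal{N}(f)=2^{n-1}-\tfrac{1}{2}\mathcal{L}(f)$ from the preliminaries then immediately yields $\mathcal{N}(f)=\mathcal{N}(g)$.

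I do not expect any real obstacle here: the computation is a routine change of variable, and the only small points to be careful about are that we are in characteristic $2$ (so signs of $b$ do not matter) and that the transpose inverse $A^{-\top}$ arises naturally when we push the linear part of $\varphi$ through the bilinear pairing $a\cdot x$. The mildly delicate step, if any, is tracking the constant $(-1)^{a\cdot A^{-1}b}$, but since we only need the absolute values of the Walsh coefficients, this sign is harmless.
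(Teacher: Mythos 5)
Your proof is correct: the change of variable $y=Ax+b$ gives $\mathcal{W}_f(a)=(-1)^{a\cdot A^{-1}b}\,\mathcal{W}_g(A^{-\top}a)$, and since $a\mapsto A^{-\top}a$ is a bijection the multiset equality and $\mathcal{N}(f)=\mathcal{N}(g)$ follow at once. The paper states this proposition without proof, as a well-known fact in the preliminaries, and your argument is precisely the standard one that would be supplied.
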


Next we present the theorem on classification of quadratic Boolean functions, whose proof can be found in \cite{Mac} page 438.

\begin{theorem}\label{quadratic}
	Let $f\in B_n$ be quadratic. Then
	
	\begin{itemize}
		\item[(i)] $f\sim_A x_1x_2+\cdots x_{2k-1}x_{2k}+x_{2i+1}$ with $k\leq \lfloor \frac{n-1}{2}\rfloor$ if $f$ is balanced,
		\item[(ii)] $f\sim_A x_1x_2+\cdots x_{2k-1}x_{2k}+c$, with $k\leq \lfloor \frac{n}{2}\rfloor$ and $c\in \mathbb{F}$, if $f$ is unbalanced.
	\end{itemize}
\end{theorem}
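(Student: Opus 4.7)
The plan is to induct on $n$: at each step I isolate a single quadratic monomial in the ANF and perform an affine change of variables that completes the square so as to peel it off, thereby reducing the problem to a Boolean function of degree at most two on fewer variables. The base cases $n=1,2$ can be dispatched by a direct enumeration of the possible ANFs.

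For the inductive step, I would actually prove the slightly stronger statement that every $f\in B_n$ with $\deg(f)\le 2$ is affinely equivalent to one of the two listed canonical forms (allowing $k=0$). If $\deg(f)\le 1$ then after a standard affine coordinate change sending the linear part to a single variable, $f$ already sits in one of the two forms. Otherwise, pick a quadratic monomial $x_ix_j$ in the ANF; up to relabelling take $i=1,j=2$ and group the monomials containing $x_1$ or $x_2$ as
\[
f(x)=x_1x_2+x_1\alpha(x_3,\dots,x_n)+x_2\beta(x_3,\dots,x_n)+g(x_3,\dots,x_n),
\]
with $\alpha,\beta$ affine and $\deg(g)\le 2$. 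The identity
\[
f(x)=(x_1+\beta)(x_2+\alpha)+\bigl(g+\alpha\beta\bigr)
\]
motivates the affinity $y_1=x_1+\beta$, $y_2=x_2+\alpha$, $y_i=x_i$ for $i\ge 3$ (it is an affinity because $\alpha,\beta$ are affine in the remaining variables). In the new coordinates $f=y_1y_2+h(y_3,\dots,y_n)$ with $h:=g+\alpha\beta$ of degree at most two, so the inductive hypothesis applied to $h$ yields an affinity of $\mathbb{F}^{n-2}$ that puts $h$ in canonical form; extending it by the identity on $y_1,y_2$ finishes the reduction. The upper bounds on $k$ come from the fact that each paired product consumes two coordinates and the linear term in (i) consumes one more.

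To finish, I match each canonical form with balancedness via Proposition~\ref{equivalence-properties}. The form in (i) is balanced because summing over the free variable $x_{2k+1}$ shows $f$ takes each value equally often; a short direct count for the form in (ii) gives weight $2^{n-1}\pm 2^{n-k-1}\neq 2^{n-1}$, hence it is unbalanced. Since weight is preserved by affine equivalence, the canonical form of any balanced quadratic must be of type (i), and of any unbalanced quadratic of type (ii).

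The main obstacle I expect is the bookkeeping in the inductive step, specifically: checking that $h=g+\alpha\beta$ really has degree at most two after a possible cancellation (so that the strengthened inductive hypothesis applies), handling the degenerate sub-case where the surviving $h$ is affine (absorbing any residual constant by the translation $y_{2k+1}\mapsto y_{2k+1}+c$), and verifying that no issue arises when $n-2\le 0$. These points are all routine but must be spelled out to make the induction close cleanly.
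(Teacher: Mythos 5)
Your proof is correct and is precisely the classical Dickson-type reduction (complete the square on one hyperbolic pair, change coordinates, induct on the remaining degree-$\le 2$ function, then separate the two canonical forms by balancedness); the paper does not prove Theorem~\ref{quadratic} itself but cites MacWilliams--Sloane, where essentially this same argument appears. The caveats you flag (the strengthened hypothesis allowing $k=0$, the degree of $g+\alpha\beta$, and absorbing a residual constant into the linear term) are exactly the right routine points to spell out, so nothing is missing.
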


The proof of the next theorem and lemma (respectively) can be found in \cite{Cus} on page 134. 

\begin{theorem}\label{nonlinearity-quadratics}
	Let $f$ be a quadratic Bf denoted as in Theorem \ref{quadratic}. Then we have \\$\mathcal{W}_f(a)\in\{0,\pm 2^{n-k}\}$, for $a\in \mathbb{F}^n$, and \(\mathcal{N}(f)=2^{n-1}-2^{n-k-1}\).
\end{theorem}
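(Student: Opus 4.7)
The plan is to exploit affine invariance to reduce to the canonical form, and then compute the Walsh transform directly by factoring the exponential sum.

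First I would invoke Proposition \ref{extended-Walsh}: since the absolute Walsh spectrum and the nonlinearity are preserved under affine equivalence, it suffices to establish the claim when $f$ is literally equal to one of the two canonical forms from Theorem \ref{quadratic}. I will treat the unbalanced case $f=x_1x_2+\cdots+x_{2k-1}x_{2k}+c$ first and treat the balanced case $f=x_1x_2+\cdots+x_{2k-1}x_{2k}+x_{2k+1}$ by the same method.

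For the unbalanced case, split any $x\in\mathbb{F}^n$ as $x=(x',x'')$ with $x'\in\mathbb{F}^{2k}$ carrying the variables that occur in $f$ and $x''\in\mathbb{F}^{n-2k}$ the remaining variables; split $a=(a',a'')$ accordingly. Then
\[
\mathcal{W}_f(a)=\Bigl(\sum_{x'\in\mathbb{F}^{2k}}(-1)^{f(x')+a'\cdot x'}\Bigr)\Bigl(\sum_{x''\in\mathbb{F}^{n-2k}}(-1)^{a''\cdot x''}\Bigr).
\]
The second factor vanishes unless $a''=0$, in which case it equals $2^{n-2k}$. Provided $a''=0$, the first factor further factors across the $k$ independent pairs of variables:
\[
\sum_{x'\in\mathbb{F}^{2k}}(-1)^{f(x')+a'\cdot x'}=(-1)^c\prod_{i=1}^{k}\Bigl(\sum_{(u,v)\in\mathbb{F}^2}(-1)^{uv+a'_{2i-1}u+a'_{2i}v}\Bigr).
\]
The key elementary computation is that each inner factor equals $2(-1)^{a'_{2i-1}a'_{2i}}=\pm 2$; checking the four possibilities for $(a'_{2i-1},a'_{2i})$ is routine. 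Multiplying, one gets $\mathcal{W}_f(a)=0$ whenever $a''\ne 0$ and $\mathcal{W}_f(a)=\pm 2^{n-2k}\cdot 2^{k}=\pm 2^{n-k}$ otherwise.

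For the balanced canonical form, the same splitting is carried out with the variables $x_1,\ldots,x_{2k+1}$ on one side. The extra factor coming from $x_{2k+1}$ is $\sum_{x_{2k+1}\in\mathbb{F}}(-1)^{(1+a_{2k+1})x_{2k+1}}$, which equals $2$ if $a_{2k+1}=1$ and $0$ otherwise; combined with the previous argument one again gets $\mathcal{W}_f(a)\in\{0,\pm 2^{n-k}\}$. In either case $\mathcal{L}(f)=2^{n-k}$, and the formula $\mathcal{N}(f)=2^{n-1}-\frac{1}{2}\mathcal{L}(f)=2^{n-1}-2^{n-k-1}$ from the preliminaries closes the argument. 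The only mildly delicate step is the evaluation of the $2\times 2$ sum $\sum_{u,v}(-1)^{uv+au+bv}$; everything else is bookkeeping. There is no real obstacle: the canonical form of Theorem \ref{quadratic} is designed precisely so that the Walsh transform factorizes cleanly.
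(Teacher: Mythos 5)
Your proof is correct. Note that the paper does not actually prove Theorem \ref{nonlinearity-quadratics} at all --- it defers to \cite{Cus}, p.~134 --- so there is no in-paper argument to compare against; your write-up supplies a self-contained proof where the paper has only a citation. The argument you give is the standard one and every step checks out: the reduction via Proposition \ref{extended-Walsh} is legitimate (the target set $\{0,\pm 2^{n-k}\}$ is symmetric under sign, so preserving only the absolute Walsh spectrum suffices), the splitting factor over the variables absent from $f$ correctly forces $a''=0$, the elementary evaluation $\sum_{u,v\in\mathbb{F}}(-1)^{uv+\alpha u+\beta v}=2(-1)^{\alpha\beta}$ is right for all four choices of $(\alpha,\beta)$, and the extra linear factor $\sum_{x_{2k+1}}(-1)^{(1+a_{2k+1})x_{2k+1}}\in\{0,2\}$ handles the balanced form. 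The only point worth making explicit is that $\mathcal{L}(f)=2^{n-k}$ requires the value $\pm 2^{n-k}$ to actually be \emph{attained} (not merely bounded above); this is immediate from your computation, e.g.\ at $a=0$ in the unbalanced case and at $a=e_{2k+1}$ in the balanced case, but you should say it in one sentence before invoking $\mathcal{N}(f)=2^{n-1}-\tfrac{1}{2}\mathcal{L}(f)$.
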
 

\begin{lemma}\label{weight-affine-equivalent-quadratics}
	Two quadratic Bf's $g$ and $h$ on $\mathbb{F}^n$ are affine equivalent if and only if $\mathrm{w}(g)=\mathrm{w}(h)$ and $\mathcal{N}(g)=\mathcal{N}(h)$.
\end{lemma}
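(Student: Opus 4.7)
The forward implication is immediate from material already in the paper: if $g\sim_A h$, then Proposition~\ref{equivalence-properties} gives $\mathrm{w}(g)=\mathrm{w}(h)$, and Proposition~\ref{extended-Walsh} gives $\mathcal{N}(g)=\mathcal{N}(h)$. So the content lies entirely in the reverse implication.

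For the converse, my plan is to reduce both $g$ and $h$ to the canonical forms provided by Theorem~\ref{quadratic} and then to show that the two invariants $\mathrm{w}$ and $\mathcal{N}$ pin down the canonical form uniquely. First, since weight determines balancedness ($f$ is balanced iff $\mathrm{w}(f)=2^{n-1}$), the assumption $\mathrm{w}(g)=\mathrm{w}(h)$ forces $g$ and $h$ to lie in the same case (i) or (ii) of Theorem~\ref{quadratic}. Next, writing each canonical form with parameter $k$, Theorem~\ref{nonlinearity-quadratics} gives $\mathcal{N}=2^{n-1}-2^{n-k-1}$, so the equality $\mathcal{N}(g)=\mathcal{N}(h)$ forces the same $k$ for both. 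In case (i) there is no remaining parameter, so the two canonical forms coincide and the transitivity of $\sim_A$ yields $g\sim_A h$.

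The only step requiring a small separate argument is case (ii), where the canonical form still carries the constant $c\in\mathbb{F}$. Here I would compute, using the standard factorisation $\mathcal{F}(x_1x_2+\cdots+x_{2k-1}x_{2k})=\mathcal{F}(x_1x_2)^k\cdot 2^{n-2k}=2^{n-k}$ and Remark~\ref{fourier-equivalence}, that the two possibilities $c=0$ and $c=1$ give weights $2^{n-1}-2^{n-k-1}$ and $2^{n-1}+2^{n-k-1}$ respectively, which are distinct. Hence $\mathrm{w}(g)=\mathrm{w}(h)$ together with the common value of $k$ forces the same $c$ for $g$ and $h$, and transitivity of $\sim_A$ concludes.

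I do not foresee any real obstacle: the classification does almost all the work, and the only mildly non-trivial verification is the weight computation that distinguishes $c=0$ from $c=1$ in the unbalanced case. One point to keep straight is that because $g$ and $h$ are assumed quadratic (degree exactly $2$), the parameter $k$ in their canonical forms is at least $1$, so the quantity $2^{n-k-1}$ is a positive integer strictly less than $2^{n-1}$, guaranteeing that the two candidate weights above are genuinely different.
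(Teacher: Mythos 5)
Your proof is correct. Note that the paper does not actually prove this lemma---it cites \cite{Cus}, p.~134---and your argument is essentially that standard one: reduce to the canonical forms of Theorem~\ref{quadratic}, observe that weight fixes the case (balanced vs.\ unbalanced) and the constant $c$, while nonlinearity fixes $k$ via Theorem~\ref{nonlinearity-quadratics}, and conclude by transitivity of $\sim_A$; the weight computation distinguishing $c=0$ from $c=1$ is also right.
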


An element $a\in \mathbb{F}^n$ is called a {\em linear structure} of Bf $f$ if $D_af$ is constant. Denote by $V(f)=\{a\in\mathbb{F}^n\mid D_af \text{ is a constant}\}$ the set of all linear structures and we call it the {\em linear space} of a Bf $f$. A Bf $f$ is {\em partially-bent} if there exists a linear subspace $W$ of $\mathbb{F}^n$ such that the restriction of $f$ to $W$ is affine and the restriction of $f$ to any complementary subspace $U$ of $W$, $W\oplus U=\mathbb{F}^n$, is bent \cite{Cal}. In fact the linear subspace $W$ of $\mathbb{F}^n$ is formed by the set of all linear structures of $f$, that is, $W=V(f)$ and observe that the dimension of $U$ must be even. A partially-bent $f$ can be represented as a direct sum of the restricted functions such as $f(y+z)=f(y)+f(z)$, for all $y\in U$ and $z \in V(f)$. 

\begin{remark}\label{quadratic-partially-bent}
 From Theorem \ref{quadratic}, it can be deduced that any quadratic function $f$ is partially-bent and $\dim V(f)$ is even if $n$ is even and odd if $n$ is odd. Moreover, we must have $\dim V(f)=0$ if and only if $f$ is bent.
\end{remark} 

The following result is well-known.

\begin{proposition}\label{balanced-splitting}
	A Bf $g(x_1,...,x_{n-1})+x_n$ on $n$ variables is balanced.
\end{proposition}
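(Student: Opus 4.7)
The plan is to show directly that $\mathrm{w}(f)=2^{n-1}$, where $f(x_1,\dots,x_n)=g(x_1,\dots,x_{n-1})+x_n$. The key observation is that the variable $x_n$ appears linearly and is decoupled from $g$, so for any fixed tuple $(x_1,\dots,x_{n-1})\in\mathbb{F}^{n-1}$, exactly one of the two values $x_n=0$ or $x_n=1$ makes $f(x_1,\dots,x_n)=1$ (namely $x_n=1+g(x_1,\dots,x_{n-1})$). Summing over the $2^{n-1}$ possible tuples in $\mathbb{F}^{n-1}$ then gives $\mathrm{w}(f)=2^{n-1}$, which is exactly the definition of balancedness.

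Alternatively, I would package the same idea via the Fourier-type quantity $\mathcal{F}$ introduced just before the statement, which is often more convenient later in the paper. Splitting the sum over $x_n$ first yields
\[
\mathcal{F}(f)=\sum_{(x_1,\dots,x_{n-1})\in\mathbb{F}^{n-1}}(-1)^{g(x_1,\dots,x_{n-1})}\sum_{x_n\in\mathbb{F}}(-1)^{x_n},
\]
and the inner sum is $(-1)^0+(-1)^1=0$, so $\mathcal{F}(f)=0$. Since $f$ is balanced if and only if $\mathcal{F}(f)=0$, the conclusion follows.

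There is essentially no obstacle: the argument depends only on the fact that $x_n$ occurs as an isolated linear term and does not appear in $g$. Either formulation (direct counting or the $\mathcal{F}$ computation) is a one-line proof, and I would choose the second because it harmonises with the Walsh-transform vocabulary the paper is about to use.
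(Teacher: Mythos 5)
Both of your arguments are correct and complete: the counting argument (for each fixed $(x_1,\dots,x_{n-1})$ exactly one choice of $x_n$ gives $f=1$, so $\mathrm{w}(f)=2^{n-1}$) and the $\mathcal{F}$ computation (the inner sum $\sum_{x_n\in\mathbb{F}}(-1)^{x_n}=0$ forces $\mathcal{F}(f)=0$) are each airtight one-liners. The paper itself offers no proof, labelling the proposition as well-known, so there is nothing to compare against; it is worth noting only that your second formulation is precisely the special case of the paper's later Lemma~\ref{fourier-split} in which $h(x_n)=x_n$ is the balanced factor, so it harmonises with the machinery the authors develop in Section~\ref{sect-3} exactly as you anticipated.
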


\section{On the weight of Boolean functions}\label{sect-3}
In this section we classify the weight of some class of cubic functions and others. Some conditions for these functions to be balanced are determined. 

\begin{definition}\label{splitting}
	A Bf $f$ on $n$ variables is a {\em splitting function} if \[f\sim_A g(x_1,...,x_s)+h(x_{s+1},...,x_n),\] for some positive integer $s<n$, $g\in B_s$ and $h\in B_{n-s}$.
\end{definition}

\begin{remark}\label{split-weight-fourier}
	If $g(x_1,...,x_s)$, with $s<n$, is in $B_n$ then $\mathrm{w}(g)=2^{n-s}\mathrm{w}(g_{\restriction\mathbb{F}^s})$ and \(\mathcal{F}(g)=2^{n-s}\mathcal{F}(g_{\restriction\mathbb{F}^s})\). Furthermore, $g$ is balanced if and only if  $g_{\restriction\mathbb{F}^s}$ is balanced and also \(\mathcal{F}(g)=0\) if and only if \(\mathcal{F}(g_{\restriction\mathbb{F}^s})=0\).
\end{remark}

Next we consider the weight and balancedness of splitting Bf's.
\begin{lemma}\label{fourier-split}
	Let $f\in B_n$ be such that \(f\sim_Ag(x_1,...,x_s)+h(x_{s+1},...,x_n)\), with $s<n$ Then \[\mathcal{F}(f)=\mathcal{F}(g_{\restriction\mathbb{F}^s})\mathcal{F}(h_{\restriction\mathbb{F}^{n-s}})=2^{-n}\mathcal{F}(g)\mathcal{F}(h).\]
\end{lemma}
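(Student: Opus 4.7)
The plan is to reduce the statement to a direct computation via the affine-equivalence invariance of $\mathcal{F}$, and then conclude using Remark~\ref{split-weight-fourier} to convert between $\mathcal{F}$ of a function in $B_n$ that depends only on a few variables and $\mathcal{F}$ of its restriction.

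First, by Remark~\ref{fourier-equivalence}, affine equivalence preserves $\mathcal{F}$, so without loss of generality I may assume that
\[
f(x_1,\ldots,x_n) = g(x_1,\ldots,x_s) + h(x_{s+1},\ldots,x_n),
\]
where I now view $g$ and $h$ as functions on disjoint sets of coordinates. Writing a generic $x \in \mathbb{F}^n$ as $x=(y,z)$ with $y\in\mathbb{F}^s$ and $z\in\mathbb{F}^{n-s}$, the exponent $g(y)+h(z)$ separates, so the sum defining $\mathcal{F}(f)$ factors:
\[
\mathcal{F}(f)=\sum_{y\in\mathbb{F}^s}\sum_{z\in\mathbb{F}^{n-s}}(-1)^{g(y)+h(z)}=\Bigl(\sum_{y\in\mathbb{F}^s}(-1)^{g(y)}\Bigr)\Bigl(\sum_{z\in\mathbb{F}^{n-s}}(-1)^{h(z)}\Bigr),
\]
which is exactly $\mathcal{F}(g_{\restriction\mathbb{F}^s})\mathcal{F}(h_{\restriction\mathbb{F}^{n-s}})$. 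This gives the first equality.

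For the second equality, I invoke Remark~\ref{split-weight-fourier}: viewing $g$ as an element of $B_n$ (depending only on the first $s$ variables) gives $\mathcal{F}(g)=2^{n-s}\mathcal{F}(g_{\restriction\mathbb{F}^s})$, and similarly $\mathcal{F}(h)=2^{s}\mathcal{F}(h_{\restriction\mathbb{F}^{n-s}})$. Multiplying yields $\mathcal{F}(g)\mathcal{F}(h)=2^n\,\mathcal{F}(g_{\restriction\mathbb{F}^s})\mathcal{F}(h_{\restriction\mathbb{F}^{n-s}})$, so
\[
\mathcal{F}(g_{\restriction\mathbb{F}^s})\mathcal{F}(h_{\restriction\mathbb{F}^{n-s}})=2^{-n}\mathcal{F}(g)\mathcal{F}(h),
\]
completing the proof.

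There is no real obstacle here: the only subtle point is notational, namely keeping track of whether $g$ and $h$ are being regarded as functions in $B_s$ and $B_{n-s}$ (giving $\mathcal{F}(g_{\restriction\mathbb{F}^s})$ and $\mathcal{F}(h_{\restriction\mathbb{F}^{n-s}})$) or as functions in $B_n$ that happen to depend on a subset of the variables (giving $\mathcal{F}(g)$ and $\mathcal{F}(h)$). Once this bookkeeping is handled, both equalities follow immediately.
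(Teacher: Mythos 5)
Your proof is correct and follows essentially the same route as the paper's: both invoke the affine-equivalence invariance of $\mathcal{F}$ from Remark~\ref{fourier-equivalence}, factor the sum over $\mathbb{F}^s\times\mathbb{F}^{n-s}$ to obtain $\mathcal{F}(g_{\restriction\mathbb{F}^s})\mathcal{F}(h_{\restriction\mathbb{F}^{n-s}})$, and then use the scaling identities of Remark~\ref{split-weight-fourier} to get the factor $2^{-n}$. No gaps.
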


\begin{proof}
	Since, by Remark \ref{fourier-equivalence}, $\mathcal{F}(f)$ is invariant under affine equivalence, then we have
	\begin{align*}
	\mathcal{F}(f)&=\sum_{(y,x)\in\mathbb{F}^s\times\mathbb{F}^{n-s}}(-1)^{g(y)+h(x)}=\sum_{y\in \mathbb{F}^s}(-1)^{g(y)}\sum_{x\in \mathbb{F}^{n-s}}(-1)^{h(x)}\\&=\mathcal{F}(g_{\restriction\mathbb{F}^s})\mathcal{F}(h_{\restriction\mathbb{F}^{n-s}})=2^{-n}\left(2^{n-s}\mathcal{F}(g_{\restriction\mathbb{F}^s})\right)\left(2^s\mathcal{F}(h_{\restriction\mathbb{F}^{n-s}})\right)\\&=2^{-n}\mathcal{F}(g)\mathcal{F}(h). \qedhere
	\end{align*}
\end{proof}

It is immediate from Remark \ref{split-weight-fourier} and Lemma \ref{fourier-split} that the following corollary holds. 

\begin{corollary}\label{general-fourier-split}
	For $t\in\mathbb{N}$ and $1\leq i\leq t$, let $X_i\subset X=\{x_1,...,x_n\}$, with $|X_i|=n_i$, be such that all $X_i$ are pairwise disjoint. If \(f(X)=\sum_{i=1}^tf_i(X_i),\) with $f_i~\in~B_{n_i}$, then \(\mathcal{F}(f)=2^{n-r}\prod_{i=1}^t\mathcal{F}({f_i}_{\restriction\mathbb{F}^{n_i}})\), with $r=n_1+\cdots +n_t$. 
\end{corollary}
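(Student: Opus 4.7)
The plan is to combine Remark \ref{split-weight-fourier} (to absorb any variables of $X$ on which $f$ does not depend) with an induction on $t$ driven by Lemma \ref{fourier-split}. The main observation is that although the $X_i$ are described only as pairwise disjoint subsets of $X$, they need not cover $X$. I will first isolate this point: writing $Z = X \setminus (X_1 \cup \cdots \cup X_t)$, so $|Z| = n - r$, and noting that $f$ depends only on the variables in $X_1 \cup \cdots \cup X_t$, one application of Remark \ref{split-weight-fourier} (viewing $f \in B_n$ as a function supported on $r$ variables) yields
\[
\mathcal{F}(f) = 2^{n-r}\,\mathcal{F}(f_{\restriction \mathbb{F}^r}),
\]
which accounts for the explicit $2^{n-r}$ factor. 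It therefore suffices to prove $\mathcal{F}(f_{\restriction \mathbb{F}^r}) = \prod_{i=1}^t \mathcal{F}({f_i}_{\restriction \mathbb{F}^{n_i}})$ in the case where the $X_i$ partition the $r$ variables.

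This reduced statement I will establish by induction on $t$. The case $t = 1$ is immediate. For the inductive step, by permuting variables (an affine equivalence, so $\mathcal{F}$-invariant by Remark \ref{fourier-equivalence}) I may assume that $X_1$ occupies the first $n_1$ variables and $X_2 \cup \cdots \cup X_t$ the remaining $r - n_1$. Applying Lemma \ref{fourier-split} with $g = f_1$ and $h = \sum_{i=2}^{t} f_i$ then gives
\[
\mathcal{F}(f_{\restriction \mathbb{F}^r}) = \mathcal{F}({f_1}_{\restriction \mathbb{F}^{n_1}})\,\mathcal{F}\Bigl(\bigl(\textstyle\sum_{i=2}^{t} f_i\bigr)_{\restriction \mathbb{F}^{r-n_1}}\Bigr),
\]
and the inductive hypothesis applied to the second factor (whose underlying dimension is exactly $n_2 + \cdots + n_t = r - n_1$, so no excess variables remain) completes the step.

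I do not anticipate any serious obstacle here; the only subtlety worth flagging is the possibility that $\bigcup_i X_i \subsetneq X$, which forces the $2^{n-r}$ factor and must be dealt with separately before the induction begins. An essentially equivalent alternative, which I might prefer in the final write-up, is a direct one-shot computation: after permuting variables so that $x = (y_1,\ldots,y_t,z)$ with $y_i \in \mathbb{F}^{n_i}$ and $z \in \mathbb{F}^{n-r}$, the sum defining $\mathcal{F}(f)$ factors as
\[
\mathcal{F}(f) = \sum_{z \in \mathbb{F}^{n-r}}\prod_{i=1}^{t}\sum_{y_i \in \mathbb{F}^{n_i}}(-1)^{f_i(y_i)} = 2^{n-r}\prod_{i=1}^{t}\mathcal{F}({f_i}_{\restriction \mathbb{F}^{n_i}}),
\]
bypassing the induction entirely while conveying the same content.
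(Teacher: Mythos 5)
Your proposal is correct and matches the paper's approach: the paper states the corollary is immediate from Remark \ref{split-weight-fourier} and Lemma \ref{fourier-split}, and your argument is precisely that combination, with the induction on $t$ and the $2^{n-r}$ factor for uncovered variables spelled out explicitly (your one-shot factorization at the end is just the same computation done all at once). No gaps.
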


\begin{proposition}\label{split-weight}
	Let  $f\in B_n$ be such that $f\sim_Ag(x_1,...,x_s)+h(x_{s+1},...,x_n)$, with $s<n$. Then \begin{align*}\mathrm{w}(f)&=2^{n-s}\mathrm{w}(g_{\restriction\mathbb{F}^s})+2^s\mathrm{w}(h_{\restriction\mathbb{F}^{n-s}})-2\mathrm{w}(g_{\restriction\mathbb{F}^s})\mathrm{w}(h_{\restriction\mathbb{F}^{n-s}})\\&=\mathrm{w}(g)+\mathrm{w}(h)-2^{1-n}\mathrm{w}(g)\mathrm{w}(h).\end{align*}
\end{proposition}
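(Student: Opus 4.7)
The plan is to reduce this to a direct computation using Lemma \ref{fourier-split} and the identity $\mathcal{F}(\varphi)=2^n-2\mathrm{w}(\varphi)$ which converts Fourier values into weights. Since by Proposition \ref{equivalence-properties} weight is invariant under affine equivalence, I may assume without loss of generality that $f$ equals $g(x_1,\dots,x_s)+h(x_{s+1},\dots,x_n)$ on $\mathbb{F}^n$.

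First I would write $\mathcal{F}(f)=2^n-2\mathrm{w}(f)$ and similarly $\mathcal{F}(g_{\restriction\mathbb{F}^s})=2^s-2\mathrm{w}(g_{\restriction\mathbb{F}^s})$ and $\mathcal{F}(h_{\restriction\mathbb{F}^{n-s}})=2^{n-s}-2\mathrm{w}(h_{\restriction\mathbb{F}^{n-s}})$. Lemma \ref{fourier-split} then gives
\[
2^n-2\mathrm{w}(f)=\bigl(2^s-2\mathrm{w}(g_{\restriction\mathbb{F}^s})\bigr)\bigl(2^{n-s}-2\mathrm{w}(h_{\restriction\mathbb{F}^{n-s}})\bigr).
\]
Expanding the right-hand side cancels the $2^n$ term, and dividing by $2$ isolates
\[
\mathrm{w}(f)=2^{n-s}\mathrm{w}(g_{\restriction\mathbb{F}^s})+2^s\mathrm{w}(h_{\restriction\mathbb{F}^{n-s}})-2\,\mathrm{w}(g_{\restriction\mathbb{F}^s})\mathrm{w}(h_{\restriction\mathbb{F}^{n-s}}),
\]
which is the first claimed identity.

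For the second identity I would invoke Remark \ref{split-weight-fourier}, which states $\mathrm{w}(g)=2^{n-s}\mathrm{w}(g_{\restriction\mathbb{F}^s})$ and, analogously, $\mathrm{w}(h)=2^s\mathrm{w}(h_{\restriction\mathbb{F}^{n-s}})$ (since $h$ viewed as a function on $\mathbb{F}^n$ depends only on $n-s$ variables). Substituting these into the first formula immediately gives the two linear terms $\mathrm{w}(g)+\mathrm{w}(h)$, while the cross term becomes
\[
2\,\mathrm{w}(g_{\restriction\mathbb{F}^s})\mathrm{w}(h_{\restriction\mathbb{F}^{n-s}})=2\cdot\frac{\mathrm{w}(g)}{2^{n-s}}\cdot\frac{\mathrm{w}(h)}{2^s}=2^{1-n}\mathrm{w}(g)\mathrm{w}(h),
\]
completing the proof.

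There is no real obstacle here: the nontrivial content is already packaged in Lemma \ref{fourier-split}, and what remains is purely algebraic bookkeeping. The only thing to be careful about is keeping straight the distinction between $g\in B_n$ (depending only on the first $s$ variables) and its restriction $g_{\restriction\mathbb{F}^s}\in B_s$, so that the factors of $2^{n-s}$ and $2^s$ come out correctly in both forms.
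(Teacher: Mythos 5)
Your proposal is correct and follows essentially the same route as the paper's own proof: both apply Lemma \ref{fourier-split} to express $\mathcal{F}(f)$ as the product $\mathcal{F}(g_{\restriction\mathbb{F}^s})\mathcal{F}(h_{\restriction\mathbb{F}^{n-s}})$, convert to weights via $\mathcal{F}(\varphi)=2^{|\cdot|}-2\mathrm{w}(\varphi)$, expand, and then pass to the second form using Remark \ref{split-weight-fourier}. No gaps.
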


\begin{proof}
	We have 
\begin{align*}
	\mathrm{w}(f)&=2^{n-1}-\frac{1}{2}\mathcal{F}(f)=2^{n-1}-\frac{1}{2}\left(2^{-n}\mathcal{F}(g_{\restriction\mathbb{F}^s})\mathcal{F}(h_{\restriction\mathbb{F}^{n-s}})\right)\nonumber\\&=2^{n-1}-\frac{1}{2}\left[\left(2^s-2\mathrm{w}(g_{\restriction\mathbb{F}^s})\right)\left(2^{n-s}-2\mathrm{w}(h_{\restriction\mathbb{F}^{n-s}})\right)\right]\nonumber\\&=2^{n-s}\mathrm{w}(g_{\restriction\mathbb{F}^s})+2^s\mathrm{w}(h_{\restriction\mathbb{F}^{n-s}})-2\mathrm{w}(g_{\restriction\mathbb{F}^s})\mathrm{w}(h_{\restriction\mathbb{F}^{n-s}})\nonumber\\&=2^{n-s}\mathrm{w}(g_{\restriction\mathbb{F}^s})+2^s\mathrm{w}(h_{\restriction\mathbb{F}^{n-s}})-2^{1-n}\left(2^{n-s}\mathrm{w}(g_{\restriction\mathbb{F}^s})\right)\left(2^s\mathrm{w}(h_{\restriction\mathbb{F}^{n-s}})\right)\nonumber\\&=\mathrm{w}(g)+\mathrm{w}(h)-2^{1-n}\mathrm{w}(g)\mathrm{w}(h).\qedhere
\end{align*}
\end{proof}

We now present some results on balanced splitting functions.

\begin{theorem}\label{split-balanced}
	Let $f\in B_n$ be such that $f\sim_A g(x_1,...,x_s)+h(x_{s+1},...,x_n)$, with $s<n$. Then $f$ is balanced if and only if either $g$ or $h$ is balanced. 
\end{theorem}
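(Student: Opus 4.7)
The plan is to deduce this almost immediately from Lemma \ref{fourier-split} together with the characterization of balancedness in terms of $\mathcal{F}$.

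First I would recall that, by Proposition \ref{equivalence-properties} and Remark \ref{fourier-equivalence}, balancedness and the value $\mathcal{F}$ are invariant under affine equivalence, so without loss of generality we may assume $f = g(x_1,\ldots,x_s) + h(x_{s+1},\ldots,x_n)$. The key observation is that a Bf is balanced if and only if its $\mathcal{F}$ value is $0$. Applying Lemma \ref{fourier-split} gives
\[
\mathcal{F}(f) = 2^{-n}\,\mathcal{F}(g)\,\mathcal{F}(h),
\]
so $\mathcal{F}(f) = 0$ if and only if $\mathcal{F}(g) = 0$ or $\mathcal{F}(h) = 0$, since $2^{-n} \neq 0$ and we are working with integers.

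Then I would translate this back into balancedness. One subtle point to address is the meaning of "$g$ balanced" and "$h$ balanced," because in Lemma \ref{fourier-split} the functions $g$ and $h$ are viewed as elements of $B_n$ depending only on a subset of the variables, whereas balancedness is usually defined with respect to the ambient dimension. Here Remark \ref{split-weight-fourier} comes to the rescue: it states precisely that $g$ (viewed in $B_n$) is balanced if and only if $g_{\restriction\mathbb{F}^s}$ is balanced, and similarly for $h$, so the two interpretations agree and the conclusion follows.

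I do not anticipate any real obstacle: the statement is essentially a corollary of the multiplicativity in Lemma \ref{fourier-split}, and the only step that requires a brief justification is the equivalence between balancedness of $g$ as an $n$-variable function and balancedness of its restriction, which is already handled by Remark \ref{split-weight-fourier}. The whole proof should fit in a couple of lines.
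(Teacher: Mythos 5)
Your proposal is correct and follows essentially the same route as the paper: the paper's proof is the one-line chain $f$ balanced $\iff \mathcal{F}(f)=0 \iff \mathcal{F}(g_{\restriction\mathbb{F}^s})=0$ or $\mathcal{F}(h_{\restriction\mathbb{F}^{n-s}})=0 \iff g$ or $h$ balanced, using Lemma \ref{fourier-split} exactly as you do. Your extra remark about reconciling balancedness of $g$ in $B_n$ with that of its restriction via Remark \ref{split-weight-fourier} is a reasonable bit of care that the paper leaves implicit.
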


\begin{proof}
	$f$ is balanced $\Longleftrightarrow \mathcal{F}(f)=0\Longleftrightarrow \left(\mathcal{F}(g_{\restriction\mathbb{F}^s})=0 \text{ or } \mathcal{F}(h_{\restriction\mathbb{F}^{n-s}})=0\right)\Longleftrightarrow$ either $g$ or $h$ is balanced. 
\end{proof}

\begin{proposition}\label{split-terms-of-same-degree-weight}
	Let $f\in B_n$, with $\deg(f)=m$, be such that \[f\sim_A \sum_{i=0}^{k-1}\prod_{j=1}^{m}x_{mi+j}.\] Then $\mathcal{F}(f)=2^{n-mk}(2^m-2)^k$ and \(\mathrm{w}(f)=2^{n-1}-2^{n-mk-1}(2^m-2)^k\).
\end{proposition}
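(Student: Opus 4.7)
The plan is to view $f$ as a sum on disjoint variable sets and then apply Corollary \ref{general-fourier-split} directly. Since $\mathcal{F}$ is invariant under affine equivalence (Remark \ref{fourier-equivalence}), we may replace $f$ by its representative $\sum_{i=0}^{k-1}\prod_{j=1}^{m}x_{mi+j}$. Setting $X_i=\{x_{mi+1},\dots,x_{mi+m}\}$ and $f_i(X_i)=\prod_{j=1}^{m}x_{mi+j}$ gives $t=k$ pairwise disjoint variable sets, each of size $n_i=m$, with $r=mk$, so the hypotheses of Corollary \ref{general-fourier-split} are met.

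The only non-trivial computation is to determine $\mathcal{F}({f_i}_{\restriction\mathbb{F}^m})$ for a single monomial $f_i=x_{mi+1}\cdots x_{mi+m}$. Restricted to its own $m$ variables, this monomial evaluates to $1$ only at the all-ones vector and to $0$ elsewhere, hence $\mathrm{w}({f_i}_{\restriction\mathbb{F}^m})=1$ and
\[\mathcal{F}({f_i}_{\restriction\mathbb{F}^m})=2^m-2\mathrm{w}({f_i}_{\restriction\mathbb{F}^m})=2^m-2.\]

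Plugging these into Corollary \ref{general-fourier-split} yields
\[\mathcal{F}(f)=2^{n-mk}\prod_{i=1}^{k}(2^m-2)=2^{n-mk}(2^m-2)^k,\]
which is the first assertion. The weight formula then follows at once from the general identity $\mathrm{w}(f)=2^{n-1}-\tfrac{1}{2}\mathcal{F}(f)$ recalled in the preliminaries:
\[\mathrm{w}(f)=2^{n-1}-\tfrac{1}{2}\cdot 2^{n-mk}(2^m-2)^k=2^{n-1}-2^{n-mk-1}(2^m-2)^k.\]

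There is no real obstacle here; once Corollary \ref{general-fourier-split} is in hand the statement reduces to the elementary observation that a single degree-$m$ monomial in exactly $m$ variables has weight $1$. The only point to double-check is that the hypothesis $\deg(f)=m$ is consistent, i.e.\ that $k\ge 1$ and $mk\le n$, so that the restriction to $\mathbb{F}^{n_i}=\mathbb{F}^m$ in the corollary makes sense; both are implicit in the assumed ANF of the representative.
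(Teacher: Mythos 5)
Your proposal is correct and follows essentially the same route as the paper's own proof: decompose into the monomials $f_i$ on disjoint variable sets, apply Corollary \ref{general-fourier-split}, observe that each ${f_i}_{\restriction\mathbb{F}^m}$ has weight $1$ so $\mathcal{F}({f_i}_{\restriction\mathbb{F}^m})=2^m-2$, and convert to the weight via $\mathrm{w}(f)=2^{n-1}-\tfrac{1}{2}\mathcal{F}(f)$. Your explicit appeal to Remark \ref{fourier-equivalence} to justify passing to the affine representative is a small clarification the paper leaves implicit, but otherwise the two arguments coincide.
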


\begin{proof}
	First, let $f_i=\prod_{j=1}^{m}x_{mi+j}$ so that $f\sim_A \sum_{i=0}^{k-1}f_i$. Then, by Corollary~\ref{general-fourier-split}, we have  $\mathcal{F}(f)=2^{n-mk}\prod_{i=0}^{k-1}\mathcal{F}({f_i}_{\restriction{\mathbb{F}^m}})$. For all $x\in\mathbb{F}^m\setminus\{\mathbf{1}\}$,  observe that $f_i{_{\restriction{\mathbb{F}^m}}}(x)=0$, and $f_i{_{\restriction{\mathbb{F}^m}}}(\mathbf{1})=1$ so $\mathcal{F}({f_i}_{\restriction{\mathbb{F}^m}})=2^m-2$.  Thus $\mathcal{F}(f)=2^{n-mk}(2^m-2)^k$. So $\mathrm{w}(f)=2^{n-1}-\frac{1}{2}\mathcal{F}(f)=2^{n-1}-\frac{1}{2}[2^{n-mk}(2^m-2)^k]=2^{n-1}-2^{n-mk-1}(2^m-2)^k$.
\end{proof}

Observe that the function $f$ in Proposition \ref{split-terms-of-same-degree-weight} is balanced if and only if $m=1$, that is, $f$ is balanced if and only if it is a linear function.

\begin{remark}\label{quadratic_weight}
	All quadratic Bf's are splitting functions (deduced from Theorem~\ref{quadratic}) and those which are unbalanced are of the form given in Proposition~\ref{split-terms-of-same-degree-weight} and their complements, with $m=2$. So applying Proposition~\ref{split-terms-of-same-degree-weight}, $\mathrm{w}(f)=2^{n-1}-2^{n-k-1}$ and \\$\mathrm{w}(f+1)=2^{n-1}+2^{n-k-1}$. This result on weight of quadratic is well-known.
\end{remark}

Now we study the weight and balancedness of Bf's in some given form. We show how the weight of a Bf on $n$ variables can be related to the weights of some other functions at a lower dimension.

Any Bf can be expressed in the form  \begin{align}\label{factoring} f\sim_A x_1g(x_2,...,x_n)+h(x_2,...,x_n).\end{align} Observe that \(f\sim_A x_1g(x_2,...,x_n)+h(x_2,...,x_n)=x_1(g+h)+(1+x_1)h\). So any Bf $f$ on $n+1$ variables can be written in the form \begin{align}\label{convolutional-product-form}f\sim_A x_{n+1}g(x_1,...,x_n)+(1+x_{n+1})h(x_1,...,x_n).\end{align} We say that $f$ is the {\em convolutional product} of $g$ and $h$. Observe that the convolutional product is a special case of the form defined by \begin{align}\label{generalised-convolutional-product} f\sim_A \left(\prod_{j=1}^mx_j\right)g(x_{m+1},...,x_{m+n})+\left(1+\prod_{j=1}^mx_j\right)h(x_{m+1},...,x_{m+n}),\end{align} for some positive integer $m$ and Bf's $g$ and $h$ on $n$ variables. In fact, for any Bf $f$, there exists a positive integer $m$ such that $f$ can be expressed in the form \eqref{generalised-convolutional-product}. Next we show that if the weights of $g$ and $h$ are known, then the weight of $f$ is obtained.

\begin{theorem}\label{generalised-convolutional_product_weight}
	Let $f\in B_{m+n}$ be a Bf of the form (\ref{generalised-convolutional-product}). Then \begin{itemize}
		\item[(a)] \(\mathrm{w}(f)=(2^m-1)\mathrm{w}(h_{\restriction\mathbb{F}^n})+\mathrm{w}(g_{\restriction\mathbb{F}^n})\),
		\item[(b)] $f$ is balanced if and only if \(\mathcal{F}(h_{\restriction\mathbb{F}^n})=-\mathcal{F}(g_{\restriction\mathbb{F}^n})/(2^m-1)\),
		\item[(b)] $f$ is balanced if both $g$ and $h$ are balanced,
		\item[(d)] $f$ is unbalanced if one in $\{g,h\}$ is balanced and the other is not.
	\end{itemize}  	
\end{theorem}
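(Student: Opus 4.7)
The plan is to exploit the fact that the monomial $\prod_{j=1}^m x_j$ vanishes on $\mathbb{F}^m\setminus\{\mathbf{1}\}$ (which has $2^m-1$ elements) and equals $1$ only at the single point $\mathbf{1}\in\mathbb{F}^m$. Splitting the domain $\mathbb{F}^{m+n}$ accordingly, the representative on the right-hand side of \eqref{generalised-convolutional-product}, evaluated at $(y,x)\in\mathbb{F}^m\times\mathbb{F}^n$, equals $g(x)$ when $y=\mathbf{1}$ and equals $h(x)$ otherwise. Since both $\mathrm{w}$ and $\mathcal{F}$ are invariant under affine equivalence (Proposition \ref{equivalence-properties} and Remark \ref{fourier-equivalence}), I may work with this representative in place of $f$ itself.

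For (a), I would count directly from the partition: the weight contribution from $\{\mathbf{1}\}\times\mathbb{F}^n$ is $\mathrm{w}(g_{\restriction\mathbb{F}^n})$ while that from $(\mathbb{F}^m\setminus\{\mathbf{1}\})\times\mathbb{F}^n$ is $(2^m-1)\mathrm{w}(h_{\restriction\mathbb{F}^n})$, giving the asserted identity. An equivalent computation, which is the one I would actually carry out so that (b)--(d) come for free, is at the level of the Walsh transform:
\[
\mathcal{F}(f)=\sum_{x\in\mathbb{F}^n}(-1)^{g(x)}+\sum_{\substack{y\in\mathbb{F}^m\\ y\ne\mathbf{1}}}\sum_{x\in\mathbb{F}^n}(-1)^{h(x)}=\mathcal{F}(g_{\restriction\mathbb{F}^n})+(2^m-1)\mathcal{F}(h_{\restriction\mathbb{F}^n}),
\]
and then converting via $\mathrm{w}(f)=2^{m+n-1}-\mathcal{F}(f)/2$ together with $\mathcal{F}(g_{\restriction\mathbb{F}^n})=2^n-2\mathrm{w}(g_{\restriction\mathbb{F}^n})$ (and the analogous identity for $h$) recovers (a).

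For (b), $f$ is balanced iff $\mathcal{F}(f)=0$, which by the identity just obtained is equivalent to $\mathcal{F}(h_{\restriction\mathbb{F}^n})=-\mathcal{F}(g_{\restriction\mathbb{F}^n})/(2^m-1)$. Claim (c) is the special case where both $\mathcal{F}(g_{\restriction\mathbb{F}^n})$ and $\mathcal{F}(h_{\restriction\mathbb{F}^n})$ vanish, so (b) is trivially satisfied. For (d), if exactly one of these two Fourier values is zero, then (b) forces the other to vanish as well, contradicting the hypothesis; hence $\mathcal{F}(f)\ne 0$.

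I do not expect any substantial obstacle: the only real care needed is bookkeeping, namely distinguishing $g,h$ as elements of $B_n$ (justifying the restriction notation $g_{\restriction\mathbb{F}^n},h_{\restriction\mathbb{F}^n}$) from their natural extensions to $B_{m+n}$ with dummy variables, so that cardinalities of sums and values of $\mathcal{F}$ are taken over the correct ambient space. Once that is handled consistently, all four claims reduce to the single Fourier identity displayed above.
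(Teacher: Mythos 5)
Your proposal is correct and follows essentially the same route as the paper: partition $\mathbb{F}^m\times\mathbb{F}^n$ according to whether the first block equals $\mathbf{1}$, derive $\mathcal{F}(f)=(2^m-1)\mathcal{F}(h_{\restriction\mathbb{F}^n})+\mathcal{F}(g_{\restriction\mathbb{F}^n})$, convert to weights for (a), and read off (b)--(d) from this single identity. The direct weight count you sketch for (a) is a harmless equivalent of the paper's Fourier computation, and your bookkeeping remark about restrictions versus extensions is exactly the care the paper's Remark \ref{split-weight-fourier} encodes.
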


\begin{proof} We have \\\(f=\left(\prod_{j=1}^mx_j\right)g(x_{m+1},...,x_{m+n})+\left(1+\prod_{j=1}^mx_j\right)h(x_{m+1},...,x_{m+n}).\)
\begin{itemize}
		\item[(a)] Let $X=(x,y)\in\mathbb{F}^m\times\mathbb{F}^n$. Then we have 
	\begin{align}\label{fourier}
	\mathcal{F}(f)&=\sum_{X\in\mathbb{F}^{n+m}}(-1)^{f(X)}=\sum_{(x,y)\in\mathbb{F}^m\setminus \{1\}\times\mathbb{F}^n}(-1)^{h(y)}+\sum_{(x,y)\in\{1\}\times\mathbb{F}^n}(-1)^{g(y)}\nonumber\\&=(2^m-1)\sum_{y\in\mathbb{F}^n}(-1)^{h(y)}+\sum_{y\in\mathbb{F}^n}(-1)^{g(y)}\nonumber\\&=(2^m-1)\mathcal{F}(h_{\restriction\mathbb{F}^n})+\mathcal{F}(g_{\restriction\mathbb{F}^n})
	\end{align}
	 Since $\mathcal{F}(f)=2^{m+n}-2w(f)$, so we have 
	\begin{align*}
	w(f)&=2^{n+m-1}-\frac{1}{2}\mathcal{F}(f)= 2^{n+m-1}-\frac{1}{2}\left[(2^m-1)\mathcal{F}(h_{\restriction\mathbb{F}^n})+\mathcal{F}(g_{\restriction\mathbb{F}^n})\right]\\&=2^{n+m-1}-\frac{1}{2}\left[(2^m-1)(2^n-2w(h_{\restriction\mathbb{F}^n}))+(2^n-2w(g_{\restriction\mathbb{F}^n}))\right]\\&= 2^{n+m-1}-\frac{1}{2}\left[2^{n+m}-2^{m+1}w(h_{\restriction\mathbb{F}^n})+2w(h_{\restriction\mathbb{F}^n})-2w(g_{\restriction\mathbb{F}^n})\right]\\&=(2^m-1)w(h_{\restriction\mathbb{F}^n})+w(g_{\restriction\mathbb{F}^n}).
	\end{align*}
	
	\item[(b)] Recall that $f$ is balanced  if and only if we have $\mathcal{F}(f)=0$ if and only if \\$(2^m-1)\mathcal{F}(h_{\restriction\mathbb{F}^n})+\mathcal{F}(g_{\restriction\mathbb{F}^n})=0$ if and only if $\mathcal{F}(h_{\restriction\mathbb{F}^n})=-\mathcal{F}(g_{\restriction\mathbb{F}^n})/(2^m-1)$.
	
	\item[(c)] Suppose $g$ and $h$ are both balanced. Then $\mathcal{F}(g_{\restriction\mathbb{F}^n})=\mathcal{F}(h_{\restriction\mathbb{F}^n})=0$. Applying Equation \eqref{fourier}, it implies that $\mathcal{F}(f)=0$, and so $f$ is balanced.
	
	\item[(d)] Without loss of generality, suppose that $g$ is balanced while $h$ not. Then $\mathcal{F}(g_{\restriction\mathbb{F}^n})=0$ and $\mathcal{F}(h_{\restriction\mathbb{F}^n})\neq 0$ which, by Equation \eqref{fourier}, implies that $\mathcal{F}(f)\neq 0$, and so $f$ is unbalanced.\qedhere
\end{itemize}   
\end{proof}

\begin{remark}\label{convolutional_product_weight}
	If $m=1$ in Theorem \ref{generalised-convolutional_product_weight} [i.e., $f=(x_{n+1})g+(1+x_{n+1})h$] then we have $\mathrm{w}(f)=\mathrm{w}(h_{\restriction\mathbb{F}^n})+\mathrm{w}(g_{\restriction\mathbb{F}^n})$.
\end{remark}
Observe that if $g$ and $h$ are such that $g=h\circ \varphi +1$, for some affinity $\varphi$, then $f$ is balanced since $\mathrm{w}(f)=\mathrm{w}(h_{\restriction\mathbb{F}^n})+\mathrm{w}(g_{\restriction\mathbb{F}^n})=\mathrm{w}(h_{\restriction\mathbb{F}^n})+2^n-\mathrm{w}(h\circ \varphi_{\restriction\mathbb{F}^n})=\mathrm{w}(h_{\restriction\mathbb{F}^n})+2^n-\mathrm{w}(h_{\restriction\mathbb{F}^n})=2^n$. 

Finally, we consider the weight of cubic Bf's. Generally, it is difficult to determine the weight for Bf's of degree greater than $2$ (see \cite{Car4}). Next we present a result which completely describes the weight of a special class of cubic functions. This result allows us to construct an algorithm that computes the weight of any cubic function. 

Since we have the knowledge of weights of affine and quadratic functions (see Remark \ref{quadratic_weight}) and by applying Remark \ref{convolutional_product_weight}, we can state our classification theorem for the weight of the special class of cubic functions. We omit the proof of Theorem~\ref{weight-of-cubic} because it is a direct case-by-case computation.

\begin{theorem}\label{weight-of-cubic}
	Let $f=x_{n+1}g(x_1,...,x_n)+(1+x_{n+1})h(x_1,...,x_n)$ be a cubic Bf such that $\deg(g),\deg(h)\leq 2$. Then $h\sim_A q=x_1x_2+\cdots+x_{2k-1}x_{2k}$ or $h\sim_A \bar{q}=q+1$ and $g\sim_A r=x_1x_2+\cdots+x_{2\ell-1}x_{2\ell}$ or $g\sim_A\bar{r}=r+1$, with $k, \ell\leq \lfloor \frac{n}{2}\rfloor$, if $h$ and $g$ are quadratic unbalanced.  Moreover, 
	{\small \begingroup
		\allowdisplaybreaks\[\mathrm{w}(f)=\begin{cases}2^n & \text{if both } h \text{ and } g \text{ are balanced}\\ 2^{n-1}  & \text{if } h \text{ (resp. $g$) is bal. quad. and } g \text{ (resp. $h$)}=0\\ 2^n+2^{n-1}  & \text{if } h \text{ (resp. $g$) is bal. quad. and } g \text{ (resp. $h$)}= 1\\2^{n-1}\pm 2^{n-k-1}  & \text{if } h \text{ is unbal. quad. and } g=0\\2^n+2^{n-1}\pm 2^{n-k-1}  & \text{if } h \text{ is unbal. quad. and } g= 1\\ 2^{n-1}\pm 2^{n-\ell-1}  & \text{ if } h=0 \text{ and } g \text{ is unbal. quad. } \\2^n+2^{n-1}\pm 2^{n-\ell-1}   & \text{if } h=1 \text{ and } g \text{ is unbal. quad. } \\2^n\pm 2^{n-k-1}  & \text{if } h \text{ is unbal. quad. and } g \text{ is bal. }\\ 2^n\pm 2^{n-\ell-1}  & \text{if } h \text{ is bal. and } g \text{ is unbal. quad. }\\2^n-2^{n-k-1}-2^{n-\ell-1}   & \text{if } h\sim_A q \text{ and } g\sim_A r \\2^n+2^{n-k-1}+2^{n-\ell-1}   & \text{if } h\sim_A \overline{q} \text{ and } g\sim_A \overline{r}\\2^n+2^{n-k-1}-2^{n-\ell-1}  & \text{if } h\sim_A \overline{q} \text{ and } g \sim_A r\\2^n-2^{n-k-1}+2^{n-\ell-1}  & \text{if } h\sim_A q \text{ and } g \sim_A \overline{r}.\end{cases}\]\endgroup}
\end{theorem}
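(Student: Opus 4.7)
The plan is to reduce the whole statement to a bookkeeping exercise using the weight formula for the convolutional product. Since $f = x_{n+1}g + (1+x_{n+1})h$ is exactly the case $m=1$ of form (\ref{generalised-convolutional-product}), Remark \ref{convolutional_product_weight} gives
\[
\mathrm{w}(f) = \mathrm{w}(g_{\restriction\mathbb{F}^n}) + \mathrm{w}(h_{\restriction\mathbb{F}^n}),
\]
so once the individual weights of $g$ and $h$ (as Bf's in $B_n$) are known, $\mathrm{w}(f)$ follows by a single addition.

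Next I would enumerate the allowed forms of $g$ and $h$. Since $\deg g, \deg h \le 2$, each of them is either constant ($0$ or $1$), or balanced (every non-constant affine Bf is balanced, and Theorem \ref{quadratic}(i) together with Proposition \ref{balanced-splitting} handles the balanced quadratics, all having weight $2^{n-1}$), or unbalanced quadratic. In the last case, by Theorem \ref{quadratic}(ii) and Proposition \ref{equivalence-properties}, $h$ is affine equivalent either to $q = x_1x_2 + \cdots + x_{2k-1}x_{2k}$ or to $\bar q = q+1$ for some $k \le \lfloor n/2 \rfloor$, and by Remark \ref{quadratic_weight} the respective weights are
\[
\mathrm{w}(h) = 2^{n-1} - 2^{n-k-1} \quad\text{or}\quad \mathrm{w}(h) = 2^{n-1} + 2^{n-k-1},
\]
and analogously for $g$ with parameter $\ell$. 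This gives five weight-classes for each of $g$ and $h$: constant $0$, constant $1$, balanced, unbalanced $\sim_A q$, unbalanced $\sim_A \bar q$.

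The last step is to add the weights in every relevant pair. For instance, in the case $h \sim_A q$, $g \sim_A \bar r$,
\[
\mathrm{w}(f) = \bigl(2^{n-1} - 2^{n-k-1}\bigr) + \bigl(2^{n-1} + 2^{n-\ell-1}\bigr) = 2^n - 2^{n-k-1} + 2^{n-\ell-1},
\]
which matches the last line of the table; the other twelve rows follow in exactly the same way, with the symmetric cases (e.g.\ "$h$ bal.\ quad., $g=0$" vs.\ "$h=0$, $g$ bal.\ quad.") grouped into a single entry by commutativity of addition. The only genuine subtlety is precisely this grouping, i.e.\ checking that the table covers every ordered pair of weight-classes exactly once (possibly after merging symmetric pairs); no real obstacle arises, which is why the authors dismiss the proof as direct case-by-case computation.
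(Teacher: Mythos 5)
Your proposal is correct and follows exactly the route the paper itself indicates: the authors omit the proof as ``a direct case-by-case computation'' obtained from Remark~\ref{convolutional_product_weight} (giving $\mathrm{w}(f)=\mathrm{w}(g_{\restriction\mathbb{F}^n})+\mathrm{w}(h_{\restriction\mathbb{F}^n})$) together with the known weights of affine and quadratic functions from Remark~\ref{quadratic_weight} and Theorem~\ref{quadratic}. Your enumeration of the five weight-classes and the verification of the table entries supply precisely the omitted details, and they check out.
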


Thanks to Theorem \ref{weight-of-cubic}, the following corollary which gives a description of all balanced cubic functions of the class $f=x_{n+1}g(x_1,...,x_n)+(1+x_{n+1})h(x_1,...,x_n)$, with $\deg(g),\deg(h)\leq 2$, is deduced. 

\begin{corollary}\label{balanced-convolutional-cubic}
	With the same notation as in Theorem \ref{weight-of-cubic}, a cubic Bf $f$ is balanced if and only if one of the following holds: \begin{itemize}
		\item[(a)] both $g$ and $h$ are balanced, 
		\item[(b)] $g\sim_A q$ and $h\sim_A \overline{q}$, 
		\item[(c)] $g\sim_A \overline{q}$ and $h\sim_A q$.
	\end{itemize}
\end{corollary}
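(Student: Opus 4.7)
The plan is to read Corollary \ref{balanced-convolutional-cubic} off directly from Theorem \ref{weight-of-cubic}: $f$ is balanced on $\mathbb{F}^{n+1}$ precisely when $\mathrm{w}(f)=2^n$, so I need only inspect each of the twelve weight formulas in Theorem \ref{weight-of-cubic} and isolate those that evaluate to $2^n$.

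First I would discard all the cases whose weight cannot equal $2^n$. Since $0 \leq k,\ell \leq \lfloor n/2\rfloor$, the integer $2^{n-k-1}$ (and similarly $2^{n-\ell-1}$) is always a strictly positive power of two bounded by $2^{n-1}$. Therefore the weights $2^{n-1}$, $2^n+2^{n-1}$, $2^{n-1}\pm 2^{n-k-1}$, $2^n+2^{n-1}\pm 2^{n-k-1}$, and their $\ell$-analogues (corresponding to the eight cases where at least one of $g,h$ is constant) are all distinct from $2^n$; likewise $2^n \pm 2^{n-k-1}$ and $2^n \pm 2^{n-\ell-1}$ (the two mixed cases with one factor balanced and the other unbalanced quadratic) cannot equal $2^n$ because the correction term never vanishes. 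None of these cases contributes to the corollary.

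Next I would examine the four genuinely cubic cases where both $g$ and $h$ are unbalanced quadratic. The cases $h\sim_A q,\, g\sim_A r$ and $h\sim_A \overline{q},\, g\sim_A \overline{r}$ give weights $2^n\mp(2^{n-k-1}+2^{n-\ell-1})$, which are never $2^n$. The remaining two, $h\sim_A \overline{q},\, g\sim_A r$ and $h\sim_A q,\, g\sim_A \overline{r}$, give weights $2^n\pm(2^{n-k-1}-2^{n-\ell-1})$, and these equal $2^n$ exactly when $k=\ell$. The condition $k=\ell$ is precisely what allows us to use the same symbol $q$ for the canonical quadratic form of both functions, which matches cases (b) and (c) in the statement. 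Finally, the first line of the weight table (both $g$ and $h$ balanced) already yields $\mathrm{w}(f)=2^n$ unconditionally, producing case (a).

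I do not anticipate any serious obstacle, since the argument is a case-by-case inspection of an explicit formula; the only point requiring mild care is the observation that reducing the two surviving cubic cases to the corollary requires the constraint $k=\ell$ on the ranks of the underlying unbalanced quadratics, which is what the shared notation $q$ in (b) and (c) encodes. Collecting (a), (b), (c) exhausts all ways of producing weight $2^n$, proving both directions.
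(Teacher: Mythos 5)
Your proposal is correct and is exactly the argument the paper intends: the corollary is stated as an immediate deduction from Theorem~\ref{weight-of-cubic}, obtained by checking which entries of the weight table equal $2^n$, and your case inspection (including the key observation that the two surviving mixed cases force $k=\ell$, whence $r\sim_A q$ by Lemma~\ref{weight-affine-equivalent-quadratics}) matches that. The only nitpick is that your exclusion of the cases with weight $2^{n-1}\pm 2^{n-k-1}$ should invoke $k\geq 1$ (which holds since $h$ is there an unbalanced \emph{quadratic}) rather than $k\geq 0$, as $k=0$ would give $2^{n-1}+2^{n-1}=2^n$.
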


Applying Lemma \ref{weight-affine-equivalent-quadratics} and Theorem \ref{weight-of-cubic}, Corollary \ref{balanced-convolutional-cubic} can be simplified as in the following.

\begin{corollary}
	Let $f=x_{n+1}g(x_1,...,x_n)+(1+x_{n+1})h(x_1,...,x_n)$, with $g,h\in B_n$ and $\deg(h),\deg(g)\leq 2$, be cubic Boolean function. Then $f$ is balanced if and only if either both $g$ and $h$ are balanced or $g=h\circ \varphi +1$, for some affinity $\varphi$.
\end{corollary}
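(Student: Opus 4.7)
The plan is to derive this corollary directly from Corollary~\ref{balanced-convolutional-cubic} by showing that its cases (b) and (c), which describe the unbalanced-quadratic configurations yielding a balanced $f$, can be unified under the single condition $g = h\circ\varphi + 1$ for some affinity $\varphi$, while case (a) becomes the "both balanced" branch of the new statement.

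First I would dispatch the forward direction. Suppose $f$ is balanced; by Corollary~\ref{balanced-convolutional-cubic} one of (a), (b), (c) holds. Case (a) gives the first alternative directly. In case (b) one has $g\sim_A q$ and $h\sim_A \overline{q}$ for the same $q=x_1x_2+\cdots+x_{2k-1}x_{2k}$, so $h+1 \sim_A q \sim_A g$, yielding an affinity $\varphi$ with $g = (h+1)\circ\varphi = h\circ\varphi + 1$. Case (c) is symmetric. A slicker route, which justifies the hint about Lemma~\ref{weight-affine-equivalent-quadratics}, is to observe that in both cases (b) and (c) the functions $g$ and $h+1$ are quadratic unbalanced, and balancedness of $f$ combined with Remark~\ref{convolutional_product_weight} forces $\mathrm{w}(g) = 2^n - \mathrm{w}(h) = \mathrm{w}(h+1)$; Theorem~\ref{nonlinearity-quadratics} then gives $\mathcal{N}(g) = \mathcal{N}(h+1)$, so Lemma~\ref{weight-affine-equivalent-quadratics} yields $g\sim_A h+1$ and thus the desired $\varphi$.

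For the converse I would simply reuse the weight computation already recorded after Remark~\ref{convolutional_product_weight}: if $g = h\circ\varphi + 1$, then Proposition~\ref{equivalence-properties} gives $\mathrm{w}(h\circ\varphi)=\mathrm{w}(h)$, so $\mathrm{w}(g) = 2^n - \mathrm{w}(h)$, and Remark~\ref{convolutional_product_weight} then delivers $\mathrm{w}(f) = \mathrm{w}(g)+\mathrm{w}(h) = 2^n$; if instead both $g,h$ are balanced, $f$ is balanced by Corollary~\ref{balanced-convolutional-cubic}(a).

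The only mild obstacle is case bookkeeping: one must check, using Theorem~\ref{weight-of-cubic}, that in configurations where one of $g,h$ is affine (possibly constant), a balanced $f$ can only arise when both $g$ and $h$ are balanced, so no such configuration escapes the two alternatives of the new statement.
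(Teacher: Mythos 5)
Your proposal is correct and essentially matches the paper's own argument: the paper proves the statement directly from $\mathcal{F}(f)=\mathcal{F}(g_{\restriction\mathbb{F}^n})+\mathcal{F}(h_{\restriction\mathbb{F}^n})$ (i.e.\ $\mathrm{w}(g)+\mathrm{w}(h)=2^n$) and then applies Lemma~\ref{weight-affine-equivalent-quadratics}, which is exactly your ``slicker route,'' and your primary route through the cases of Corollary~\ref{balanced-convolutional-cubic} is just the same computation packaged differently. Your explicit justification that equal weight of unbalanced quadratics forces equal nonlinearity (via Theorem~\ref{nonlinearity-quadratics}), and your closing bookkeeping that no affine/constant configuration yields a balanced cubic $f$, are both sound and if anything slightly more careful than the paper.
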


\begin{proof}
By Equation \eqref{fourier}, we have \(\mathcal{F}(f)=\mathcal{F}(g_{\restriction\mathbb{F}^n})+\mathcal{F}(h_{\restriction\mathbb{F}^n})\).  So it follows that $f$ is balanced $\iff \mathcal{F}(f)=0\iff \mathcal{F}(g_{\restriction\mathbb{F}^n})=-\mathcal{F}(h_{\restriction\mathbb{F}^n})\iff 2^n-2\mathrm{w}(g_{\restriction\mathbb{F}^n})=-2^n+2\mathrm{w}(h_{\restriction\mathbb{F}^n})\iff \mathrm{w}(g_{\restriction\mathbb{F}^n})+\mathrm{w}(h_{\restriction\mathbb{F}^n})=2^n\iff$  either both $g$ and $h$ are balanced or $\mathrm{w}(g_{\restriction\mathbb{F}^n})=\mathrm{w}(h_{\restriction\mathbb{F}^n}+1)$ with both  $g$ and $h$ unbalanced quadratics $\iff$ either both $g$ and $h$ are balanced or $g=h\circ \varphi +1$, for some affinity $\varphi$ (by Lemma \ref{weight-affine-equivalent-quadratics}).
\end{proof}

Now we consider cubic Bf's which cannot be expressed in the form described in Theorem \ref{weight-of-cubic}. If a Bf $f$ is expressed in the form \eqref{factoring}, that is, $f=x_1g(x_2,...,x_n)+h(x_2,...,x_n)$ then $\mathrm{w}(f)=\mathrm{w}((g+h)_{\restriction\mathbb{F}^{n-1}})+\mathrm{w}(h_{\restriction\mathbb{F}^{n-1}})$. Since our interest is in cubic functions, it can be assumed that $g$ is quadratic and $h$ can be affine, quadratic or cubic. If $h$ is affine or quadratic, then weight of $f$ can be easily computed by Theorem \ref{weight-of-cubic}. It becomes difficult  to find the weight of $f$ if $h$ is cubic since in this case it implies that $g+h$ is also cubic and finding $\mathrm{w}(h_{\restriction\mathbb{F}^{n-1}})$ and $\mathrm{w}((g+h)_{\restriction\mathbb{F}^{n-1}})$ is not easy. However, we can recursively repeat the process of decomposition of $f$ so that its weight is the sum of weights of some affine or quadratic functions on a vector space of dimension $<n$ over $\mathbb{F}$. For instance, further decomposing $g+h$ and $h$ into the form $g+h=x_2g_1(x_3,...,x_n)+h_1(x_3,...,x_n)$ and $h=x_2g'_1(x_3,...,x_n)+h'_1(x_3,...,x_n)$, the weight of $f$ becomes $\mathrm{w}(f)=\mathrm{w}((g_1+h_1)_{\restriction\mathbb{F}^{n-2}})+\mathrm{w}({h_1}_{\restriction\mathbb{F}^{n-2}})+\mathrm{w}((g'_1+h'_1)_{\restriction\mathbb{F}^{n-2}})+\mathrm{w}({h'_1}_{\restriction\mathbb{F}^{n-2}})$. We use this idea to build an algorithm which computes the weight of any cubic Bf's and its efficiency and simplicity relies on Theorem \ref{weight-of-cubic} and the known results about the weights of affine and quadratic functions.

\subsection*{Algorithm 1}\label{algorithm1}
The following algorithm computes the weight of a cubic function $f$ on $n$ variables:

\begin{tabular}{ll}
{\bf Input: }	&  cubic function $f$, \\ 
{\bf Output: }&  $\mathrm{w}(f)$,\\ 
{\bf Step 1: }&  express $f$ in the form $f=x_1g(x_2,...,x_n)+h(x_2,...,x_n)$ so that\\
& $g$ is quadratic,\\ 
{\bf Step 2: }&  if $\deg(h)\leq 2$, compute $\mathrm{w}(f)$ by using Theorem \ref{weight-of-cubic} and return $\mathrm{w}(f)$,\\ 
{\bf Step 3: }&  otherwise, recursively compute the weights of $g+h$ and $h$ by \\
&applying {\bf Step 1} and {\bf Step 2},\\ 
{\bf Step 4: }&  sum up all the weights found to obtain $\mathrm{w}(f)$.
\end{tabular}

\section{Nonlinearity of Boolean functions}\label{sect-4}
We begin with the nonlinearity of a function whose terms have the degree but their variables are pairwise disjoint.

\begin{proposition}\label{terms-of-same-degree-nonlinearity}
	Let $f\in B_n$, with $\deg(f)=m$ and $m>1$, be such that \[f\sim_A\sum_{t=0}^{k-1}\prod_{j=1}^{m}x_{mt+j}.\] Then \(\mathcal{N}(f)=2^{n-1}-2^{n-mk-1}(2^m-2)^k.\)
\end{proposition}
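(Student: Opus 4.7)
The strategy is to compute $\mathcal{L}(f)$ and then invoke $\mathcal{N}(f)=2^{n-1}-\tfrac{1}{2}\mathcal{L}(f)$, using that $\mathcal{L}$ is invariant under affine equivalence by Proposition~\ref{extended-Walsh}. So I may assume $f=\sum_{t=0}^{k-1}f_t$ with $f_t=\prod_{j=1}^m x_{mt+j}$, whose variable sets are pairwise disjoint, and $n-mk$ of the coordinates of $\mathbb{F}^n$ do not appear in $f$.

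The plan is to split a point $a\in\mathbb{F}^n$ as $a=(a',a'')$, where $a'\in\mathbb{F}^{mk}$ corresponds to the variables actually occurring in $f$ and $a''\in\mathbb{F}^{n-mk}$ to the others. The double sum defining $\mathcal{W}_f(a)$ factorises over these two blocks, and the sum over the ``free'' variables produces $2^{n-mk}$ if $a''=0$ and $0$ otherwise. Hence $\mathcal{W}_f(a)=0$ whenever $a''\neq 0$, and
\[\mathcal{W}_f(a',0)=2^{n-mk}\,\mathcal{W}_{f_{\restriction\mathbb{F}^{mk}}}(a').\]
Further decomposing $a'=(a'_0,\dots,a'_{k-1})$ with $a'_t\in\mathbb{F}^m$, and using that the $f_t$ live on disjoint variables, the standard Walsh transform factorisation gives $\mathcal{W}_{f_{\restriction\mathbb{F}^{mk}}}(a')=\prod_{t=0}^{k-1}\mathcal{W}_{{f_t}_{\restriction\mathbb{F}^m}}(a'_t)$.

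The core local computation is the Walsh spectrum of a single $m$-variable AND monomial $\mu(x)=x_1\cdots x_m$ on $\mathbb{F}^m$, which takes value $1$ only at $\mathbf{1}$. A direct calculation gives $\mathcal{W}_\mu(0)=2^m-2$ and, for $b\neq 0$,
\[\mathcal{W}_\mu(b)=\sum_{x\in\mathbb{F}^m}(-1)^{b\cdot x}-2(-1)^{b\cdot\mathbf{1}}=-2(-1)^{b\cdot\mathbf{1}}=\pm 2.\]
Thus $|\mathcal{W}_\mu|$ equals $2^m-2$ at the origin and $2$ elsewhere. Substituting back, if $j$ of the blocks $a'_t$ are nonzero then $|\mathcal{W}_{f_{\restriction\mathbb{F}^{mk}}}(a')|=(2^m-2)^{k-j}\cdot 2^j$.

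The only remaining point is to identify the maximum over $a$. Since $m>1$ we have $2^m-2\geq 2$, so $(2^m-2)^{k-j}2^j\leq (2^m-2)^k$ for every $j$, with equality when $j=0$ (and when $m=2$, for every $j$). Therefore $\mathcal{L}(f)=2^{n-mk}(2^m-2)^k$, and substituting into $\mathcal{N}(f)=2^{n-1}-\tfrac{1}{2}\mathcal{L}(f)$ yields the claimed value. The only mildly delicate step is the last one, verifying that the maximum of the product is attained at $a=0$; everything else is a routine factorisation, and the hypothesis $m>1$ is used precisely to guarantee $2^m-2\geq 2$.
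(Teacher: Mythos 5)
Your proof is correct and follows essentially the same route as the paper's: restrict to $\alpha$ supported on the variables of $f$, factor the Walsh transform over the disjoint monomial blocks, compute the local spectrum of a single monomial, and maximize. If anything, your explicit computation that the off-origin local values are $\pm 2$ and the comparison $(2^m-2)^{k-j}2^j\leq(2^m-2)^k$ makes the maximization step cleaner than the paper's, which argues somewhat loosely that each factor is maximal at $a_i=0$.
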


\begin{proof}
	Let $f_i=\prod_{j=1}^{m}x_{mi+j}$. Then $f\sim_A \sum_{i=0}^{k-1}f_i$. Let $l_\alpha(x)=\alpha\cdot x$, where $\alpha,x\in\mathbb{F}^n$. Observe that $f+l_\alpha$ is balanced if $l_\alpha$ has some variables which are not in $f$ (see Proposition \ref{balanced-splitting}) and in this case, we have $\mathcal{W}_{f}(\alpha)=\mathcal{F}(f+l_\alpha)=0$. Thus we can assume that $l_\alpha(x)=l_a(X)=a\cdot X$, with \(a=(a_0,...,a_{k-1})\) and $X=(y_0,...,y_{k-1})$ in  $\left(\mathbb{F}^m\right)^k$, so that all variables in $l_a$ are also in $f$. By Corollary~\ref{general-fourier-split}, we have \[\mathcal{W}_{f}(\alpha)=\mathcal{F}(f+l_a)=2^{n-mk}\prod_{i=0}^{k-1}\mathcal{F}([f_i+l_{a_i}]_{\restriction\mathbb{F}^m}).\] Recall that $\mathcal{N}(f)=2^{n-1}-\frac{1}{2}\max_{\alpha\in\mathbb{F}^n}|\mathcal{W}_{f}(\alpha)|$. Clearly, $|\mathcal{W}_{f}(\alpha)|$ is maximal if all \\$\mathcal{F}([g_i+l_{a_i}]_{\restriction\mathbb{F}^m})$ are maximal. $\mathcal{F}([f_i+l_{a_i}]_{\restriction\mathbb{F}^m})=2^m-2\mathrm{w}([f_i+l_{a_i}]_{\restriction\mathbb{F}^m})$ and it is clear that $\mathrm{w}([f_i+l_{a_i}]_{\restriction\mathbb{F}^m})\neq 0$. So $\mathcal{F}([f_i+l_{a_i}]_{\restriction\mathbb{F}^m})$ is maximal if $a_i=(0,...,0)$ since in this case $\mathrm{w}([f_i+l_{a_i}]_{\restriction\mathbb{F}^m})=\mathrm{w}({f_i}_{\restriction\mathbb{F}^m})=1$. Thus, $|\mathcal{W}_{f}(\alpha)|$ is maximal if, for all $i$, we have $\mathcal{F}([f_i+l_{a_i}]_{\restriction\mathbb{F}^m})=\mathcal{F}({f_i}_{\restriction\mathbb{F}^m})=2^m-2$, implying that it is maximal when $\alpha=(0,...,0)$. Substituting $\mathcal{F}([f_i+l_{a_i}]_{\restriction\mathbb{F}^m})=2^m-2$, we obtain $\mathcal{W}_{f}(\alpha)=2^{n-mk}(2^m-2)^k$. Hence \(\mathcal{N}(f)=2^{n-1}-2^{n-mk-1}(2^m-2)^k.\)
\end{proof}

\begin{remark}
	We deduce from Proposition \ref{terms-of-same-degree-nonlinearity} that $f$ is bent if and only if $m=2$ and $k=n/2$, for $n$ even, otherwise $2^{n-mk-1}2^k(2^{m-1}-1)^k$ would be equal to $2^{\frac{n}{2}-1}$, for some positive integer $k$, contradicting the fact that $(2^{m-1}-1)\nmid 2^{\frac{n}{2}-1}$ since $(2^{m-1}-1)$ is odd and  $2^{\frac{n}{2}-1}$ cannot be divisible by an odd number. 
\end{remark} 

\begin{theorem}\label{walsh-nonlinearity-generalised-conv}
	Let $f$ be a Bf of the form (\ref{generalised-convolutional-product}). Let $\alpha=(a,b)\in \mathbb{F}^m\times \mathbb{F}^n$, with $a=(a_1,...,a_m)$ and $b=(b_1,...,b_n)$ . Then 
	\begin{itemize}
		\item[(i)]\( \mathcal{W}_f(\alpha)=
		\begin{cases}
		\left(2^m-1\right)\mathcal{W}_{h_{\restriction\mathbb{F}^n}}(b)+\mathcal{W}_{g_{\restriction\mathbb{F}^n}}(b) & \text{ if } a=0\\(-1)^\lambda\left(\mathcal{W}_{g_{\restriction\mathbb{F}^n}}(b)-\mathcal{W}_{h_{\restriction\mathbb{F}^n}}(b)\right)& \text{ otherwise},
		\end{cases}\)  \\with $\lambda=a_1+\cdots +a_m$,
		\item [(ii)] \(\mathcal{N}(f)\geq (2^m-1)\mathcal{N}(h_{\restriction\mathbb{F}^n})+\mathcal{N}(g_{\restriction\mathbb{F}^n})\).	
	\end{itemize}
\end{theorem}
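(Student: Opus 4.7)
The plan is to prove part (i) by direct computation of the Walsh sum, splitting the domain according to whether the first $m$ coordinates all equal $1$, and then to derive (ii) from (i) via the triangle inequality together with the identity $\mathcal{N}(\cdot)=2^{n+m-1}-\tfrac{1}{2}\mathcal{L}(\cdot)$. By Proposition~\ref{extended-Walsh}, both the Walsh spectrum (in absolute value) and the nonlinearity are invariant under affine equivalence, so there is no loss in assuming that $f$ is literally equal to the right-hand side of (\ref{generalised-convolutional-product}).

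For (i), I would write $X=(x,y)\in\mathbb{F}^m\times\mathbb{F}^n$ and $\alpha=(a,b)\in\mathbb{F}^m\times\mathbb{F}^n$, and split
\[
\mathcal{W}_f(\alpha)=\sum_{x\ne\mathbf{1}}\sum_{y\in\mathbb{F}^n}(-1)^{h(y)+a\cdot x+b\cdot y}+\sum_{y\in\mathbb{F}^n}(-1)^{g(y)+a\cdot\mathbf{1}+b\cdot y}.
\]
The second summand equals $(-1)^{\lambda}\mathcal{W}_{g_{\restriction\mathbb{F}^n}}(b)$ with $\lambda=a_1+\cdots+a_m$, while the first factors as $\bigl(\sum_{x\ne\mathbf{1}}(-1)^{a\cdot x}\bigr)\mathcal{W}_{h_{\restriction\mathbb{F}^n}}(b)$. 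Applying the classical identity $\sum_{x\in\mathbb{F}^m}(-1)^{a\cdot x}=2^m\delta_{a,0}$, the inner sum becomes $2^m-1$ when $a=0$ (in which case $\lambda=0$ too) and $-(-1)^{\lambda}$ when $a\ne 0$. Combining the two contributions yields exactly the two branches of the stated formula.

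For (ii), I would invoke the triangle inequality on the expression obtained in (i). When $a=0$,
\[
|\mathcal{W}_f(\alpha)|\le(2^m-1)\mathcal{L}(h_{\restriction\mathbb{F}^n})+\mathcal{L}(g_{\restriction\mathbb{F}^n}),
\]
and when $a\ne 0$, $|\mathcal{W}_f(\alpha)|\le\mathcal{L}(g_{\restriction\mathbb{F}^n})+\mathcal{L}(h_{\restriction\mathbb{F}^n})$, which is no larger than the previous bound for $m\ge 1$. Hence $\mathcal{L}(f)\le(2^m-1)\mathcal{L}(h_{\restriction\mathbb{F}^n})+\mathcal{L}(g_{\restriction\mathbb{F}^n})$. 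Substituting into $\mathcal{N}(f)=2^{n+m-1}-\tfrac{1}{2}\mathcal{L}(f)$ and using $2^{n+m-1}=(2^m-1)2^{n-1}+2^{n-1}$, the right-hand side rewrites as $(2^m-1)\mathcal{N}(h_{\restriction\mathbb{F}^n})+\mathcal{N}(g_{\restriction\mathbb{F}^n})$, giving the desired inequality.

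The only nuance I expect to have to be careful about is the bookkeeping of the sign $(-1)^{\lambda}$, specifically recognizing that when $a=0$ the contribution from $x=\mathbf{1}$ produces $+\mathcal{W}_{g_{\restriction\mathbb{F}^n}}(b)$ (with $\lambda=0$), so both pieces add rather than subtract; this is what distinguishes the two cases in (i). Everything else reduces to standard manipulations of Walsh sums and orthogonality of characters, so no genuine obstacle is anticipated.
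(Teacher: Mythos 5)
Your proposal is correct and follows essentially the same route as the paper: part (i) by splitting the Walsh sum over $x=\mathbf{1}$ versus $x\neq\mathbf{1}$ and applying character orthogonality (the paper completes the full sum and subtracts the $\mathbf{1}$-term, while you evaluate $\sum_{x\neq\mathbf{1}}(-1)^{a\cdot x}$ directly, which is the same computation), and part (ii) by the triangle inequality, observing that the $a=0$ bound dominates, and converting $\mathcal{L}$ to $\mathcal{N}$. No gaps.
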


\begin{proof} Since nonlinearity is invariant under affine equivalence, we can simply write\\ \(f=\left(\prod_{j=1}^mx_j\right)g(x_{m+1},...,x_{m+n})+\left(1+\prod_{j=1}^mx_j\right)h(x_{m+1},...,x_{m+n}).\)
	Let\\ $X=(y,x)\in \mathbb{F}^m\times \mathbb{F}^n$, with $y=(x_1,...,x_m)$ and $x=(x_{m+1},...,x_{m+n})$, and $\mathbf{1}=(1,1,...,1)$. Then 
	\begingroup
	\allowdisplaybreaks
	\begin{align*}
	\mathcal{W}_f(\alpha)&=\sum_{X\in\mathbb{F}^{m+n}}(-1)^{f(X)+\alpha\cdot X}\\&=\sum_{(y,x)\in\mathbb{F}^m\setminus\{\mathbf{1}\}\times \mathbb{F}^n}(-1)^{h(x)+a\cdot y+b\cdot x}+\sum_{(y,x)\in\{\mathbf{1}\}\times \mathbb{F}^n}(-1)^{g(x)+a\cdot y+b\cdot x}\\&=\sum_{(y,x)\in\mathbb{F}^m\times \mathbb{F}^n}(-1)^{h(x)+a\cdot y+b\cdot x}-\sum_{(y,x)\in\{\mathbf{1}\}\times \mathbb{F}^n}(-1)^{h(x)+a\cdot y+b\cdot x}\\&+\sum_{(y,x)\in\{\mathbf{1}\}\times \mathbb{F}^n}(-1)^{g(x)+a\cdot y+b\cdot x}\\&=\left( \sum_{y\in\mathbb{F}^m}(-1)^{a\cdot y}\right)\cdot\left(\sum_{x\in\mathbb{F}^n}(-1)^{h(x)+b\cdot x}\right)-(-1)^\lambda\sum_{x\in\mathbb{F}^n}(-1)^{h(x)+b\cdot x}\\&+(-1)^\lambda\sum_{x\in \mathbb{F}^n}(-1)^{g(x)+b\cdot x}\\&=\left( \sum_{y\in\mathbb{F}^m}(-1)^{a\cdot y}\right)\mathcal{W}_{h_{\restriction\mathbb{F}^n}}(b)-(-1)^\lambda\mathcal{W}_{h_{\restriction\mathbb{F}^n}}(b)+(-1)^\lambda\mathcal{W}_{g_{\restriction\mathbb{F}^n}}(b)\\&=\begin{cases} (2^m-1)\mathcal{W}_{h_{\restriction\mathbb{F}^n}}(b)+\mathcal{W}_{g_{\restriction\mathbb{F}^n}}(b)& \text{ if } a=0\\ (-1)^\lambda\left[\mathcal{W}_{g_{\restriction\mathbb{F}^n}}(b)-\mathcal{W}_{h_{\restriction\mathbb{F}^n}}(b)\right]& \text{ otherwise}.
	\end{cases}
	\end{align*}
	\endgroup
	To reach the last step we used the fact that \[\sum_{y\in\mathbb{F}^m}(-1)^{a\cdot y}=\begin{cases}
	2^m & \text{ if } a=0,\\0 & \text{ otherwise}
	\end{cases}\]
	and also that $\lambda=0$ if $a=0$.
	
		For any two integers $c$ and $d$, it is well-known that $|c+d|\leq |c|+|d|$. Clearly, we have  
	
	\[|\mathcal{W}_f(\alpha)|\leq \begin{cases} (2^m-1)|\mathcal{W}_{h_{\restriction\mathbb{F}^n}}(b)|+|\mathcal{W}_{g_{\restriction\mathbb{F}^n}}(b)|& \text{ if } a=0\\ |\mathcal{W}_{g_{\restriction\mathbb{F}^n}}(b)|+|\mathcal{W}_{h_{\restriction\mathbb{F}^n}}(b)|& \text{ otherwise}. \end{cases}\] Since \[|\mathcal{W}_{g_{\restriction\mathbb{F}^n}}(b)|+|\mathcal{W}_{h_{\restriction\mathbb{F}^n}}(b)|\leq (2^m-1)|\mathcal{W}_{h_{\restriction\mathbb{F}^n}}(b)|+|\mathcal{W}_{g_{\restriction\mathbb{F}^n}}(b)|,\] then we deduce that, for any $\alpha=(a,b)$, we have 
	\[|\mathcal{W}(\alpha)|\leq (2^m-1)|\mathcal{W}_{h_{\restriction\mathbb{F}^n}}(b)|+|\mathcal{W}_{g_{\restriction\mathbb{F}^n}}(b)|.\]
	So
\begin{align*}
	\mathcal{N}(f)&=2^{n+m-1}-\frac{1}{2}\max_{\alpha\in\mathbb{F}^{n+m}}|\mathcal{W}_f(\alpha)|\\&\geq 2^{n+m-1}-\frac{1}{2}\max_{b\in \mathbb{F}^n}\left((2^m-1)|\mathcal{W}_{h_{\restriction\mathbb{F}^n}}(b)|+|\mathcal{W}_{g_{\restriction\mathbb{F}^n}}(b)|\right)\\&\geq 2^{n+m-1}-\frac{1}{2}(2^m-1)\max_{b\in \mathbb{F}^n}|\mathcal{W}_{h_{\restriction\mathbb{F}^n}}(b)|-\frac{1}{2}\max_{b\in \mathbb{F}^n}|\mathcal{W}_{g_{\restriction\mathbb{F}^n}}(b)|\\&=(2^m-1)2^{n-1}+2^{n-1}-\frac{1}{2}(2^m-1)\max_{b\in \mathbb{F}^n}|\mathcal{W}_{h_{\restriction\mathbb{F}^n}}(b)|-\frac{1}{2}\max_{b\in \mathbb{F}^n}|\mathcal{W}_{g_{\restriction\mathbb{F}^n}}(b)|\\&=(2^m-1)2^{n-1}-\frac{1}{2}(2^m-1)\max_{b\in \mathbb{F}^n}|\mathcal{W}_{h_{\restriction\mathbb{F}^n}}(b)|+2^{n-1}-\frac{1}{2}\max_{b\in \mathbb{F}^n}|\mathcal{W}_{g_{\restriction\mathbb{F}^n}}(b)|\\&=(2^m-1)\mathcal{N}(h_{\restriction\mathbb{F}^n})+\mathcal{N}(g_{\restriction\mathbb{F}^n}).\qedhere
	\end{align*}
\end{proof}

\begin{remark}\label{nonlinearity-remark}
	Note that if $m=1$, by Theorem \ref{walsh-nonlinearity-generalised-conv}, the nonlinearity of \[f\sim_A x_{n+1}g(x_1,..,x_n)+(1+x_{n+1})h(x_1,...,x_n)\] is $\mathcal{N}(f)\geq \mathcal{N}(h_{\restriction\mathbb{F}^n})+\mathcal{N}(g_{\restriction\mathbb{F}^n})$.
\end{remark}
It is immediate from Theorem \ref{nonlinearity-quadratics} and Remark \ref{nonlinearity-remark} that the following corollary holds.

\begin{corollary}\label{nonlinearity-cubics}
	Let $f$ be as described in Theorem \ref{weight-of-cubic}. Then \[\mathcal{N}(f)\geq\begin{cases} 2^{n-1}-2^{n-k-1} & \text{ if } g \text{ is quadratic and } h \text{ affine,}\\2^{n-1}-2^{n-\ell-1} & \text{ if } g \text{ is affine and } h \text{ quadratic,}\\2^n-2^{n-k-1}-2^{n-\ell-1} & \text{ if both } g \text{ and } h \text{ are quadratic}.
	\end{cases}\]
\end{corollary}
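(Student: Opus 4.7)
The plan is to combine two ingredients that are already on the table: the general lower bound from Remark \ref{nonlinearity-remark} and the exact nonlinearity of an arbitrary quadratic Boolean function from Theorem \ref{nonlinearity-quadratics}. Since the hypothesis puts $f$ in the convolutional product form (\ref{convolutional-product-form}), i.e.\ the $m=1$ instance of (\ref{generalised-convolutional-product}), Remark \ref{nonlinearity-remark} (itself the $m=1$ specialisation of Theorem \ref{walsh-nonlinearity-generalised-conv}(ii)) gives immediately
\[\mathcal{N}(f)\geq \mathcal{N}(g_{\restriction\mathbb{F}^n})+\mathcal{N}(h_{\restriction\mathbb{F}^n}),\]
so the whole proof is reduced to evaluating the right-hand side in each of the three cases.

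The next step is to substitute the known nonlinearities of the two summands. For an affine summand the nonlinearity is $0$, so it contributes nothing. For a quadratic summand, Theorem \ref{quadratic} guarantees that it is affine equivalent to a canonical form having a certain number of disjoint monomial pairs (the parameters $k$ and $\ell$ of Theorem \ref{weight-of-cubic}); Theorem \ref{nonlinearity-quadratics} then gives the nonlinearity of this canonical form as $2^{n-1}-2^{n-k-1}$ or $2^{n-1}-2^{n-\ell-1}$, and Proposition \ref{extended-Walsh} ensures that this value is preserved under the affine equivalence, so that it equals $\mathcal{N}(h_{\restriction\mathbb{F}^n})$ or $\mathcal{N}(g_{\restriction\mathbb{F}^n})$ respectively. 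Plugging these values into the displayed inequality in each of the three cases (quadratic plus affine, affine plus quadratic, quadratic plus quadratic) yields exactly the three right-hand sides in the statement.

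There is essentially no hard step: the corollary is a one-line consequence once the two cited results are on the table. The only piece of bookkeeping to be careful about is keeping the canonical-form parameters matched with the correct summand under the convention fixed in Theorem \ref{weight-of-cubic}, so that each quadratic contribution is paired with its own parameter and the two quadratic contributions in the third case add up without being conflated.
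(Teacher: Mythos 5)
Your proposal is correct and is exactly the paper's argument: the authors state that the corollary "is immediate from Theorem \ref{nonlinearity-quadratics} and Remark \ref{nonlinearity-remark}," which is precisely the combination you use (the $m=1$ lower bound $\mathcal{N}(f)\geq \mathcal{N}(g_{\restriction\mathbb{F}^n})+\mathcal{N}(h_{\restriction\mathbb{F}^n})$ plus the known nonlinearity $2^{n-1}-2^{n-k-1}$ of a quadratic and $0$ of an affine function). Your closing caution about matching each parameter to its summand is well taken, since the corollary's first two cases as printed actually swap $k$ and $\ell$ relative to the convention of Theorem \ref{weight-of-cubic}, where $k$ is attached to $h$ and $\ell$ to $g$.
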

Corollary \ref{nonlinearity-cubics} suggests a way of constructing Bf's with high non-linearity.

\section{A Characterization of APN Functions}\label{sect-5}
In this section we define a parameter which is used for characterization of quadratic and cubic APN function. This parameter can also be used to describe some properties for quadratic and cubic partially-bent functions.

\subsection{Some known results on APN functions}
Some definitions and known results on APN functions, which can be found in \cite{Ber,Beth,Cal,Cant,Car1}, are reported.

\begin{definition}
	 Define \(\delta_F(a,b)=|\{x\in \mathbb{F}^n| D_aF(x)=b\}|\), for $a,b\in \mathbb{F}^n$ and vBf $F$. The {\em differential uniformity of $F$ is} \[\delta(F)=\max_{a\neq 0,b\in \mathbb{F}^n}\delta_F(a,b)\] and always satisfies $\delta(F)\geq 2$. A function with $\delta(F)=2$ is called {\em Almost Perfect Nonlinear (APN)}.
\end{definition}

For a vBf $F$, the $k$th power moment of Walsh transform is defined as \[L_k(F)=\sum_{\lambda\in\mathbb{F}^n\setminus\{0\}}L(F_\lambda).\] 

Next we state a result in which APN functions are characterized by the fourth power moment of Walsh transform. 
\begin{theorem}\label{APN-momentum-vectorial}
	Let $F$ be a vBf from $\mathbb{F}^n$ to itself. Then \[L_4(F)\geq 2^{3n+1}(2^n-1).\] Moreover, $F$ is APN if and only if equality holds.
\end{theorem}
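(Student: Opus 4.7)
The plan is to derive a Parseval-type identity linking the sum $\sum_{a,b}\delta_F(a,b)^2$ to the fourth power moment $L_4(F)$, and then to lower-bound the combinatorial sum directly using a structural constraint on differential spectra.

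First, the orthogonality relation stating that $\sum_\lambda (-1)^{\lambda\cdot y}$ equals $2^n$ when $y=0$ and vanishes otherwise, applied with $y=D_aF(x)+b$, lets me write
\[
\delta_F(a,b)\;=\;2^{-n}\sum_{\lambda\in\mathbb{F}^n}(-1)^{\lambda\cdot b}\,\Phi(\lambda,a),
\]
where $\Phi(\lambda,a):=\sum_x(-1)^{F_\lambda(x)+F_\lambda(x+a)}$ is the autocorrelation of the component $F_\lambda$. Parseval in the variable $b$ then gives $\sum_b\delta_F(a,b)^2=2^{-n}\sum_\lambda\Phi(\lambda,a)^2$. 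A short change of variable shows that $\sum_a(-1)^{u\cdot a}\Phi(\lambda,a)=\mathcal{W}_{F_\lambda}(u)^2$ (the Wiener--Khinchin-type identity linking autocorrelation to squared Walsh spectrum), so Parseval in $a$ yields $\sum_a\Phi(\lambda,a)^2=2^{-n}\sum_u\mathcal{W}_{F_\lambda}(u)^4$. Combining,
\[
\sum_{a,b}\delta_F(a,b)^2\;=\;2^{-2n}\sum_{\lambda\in\mathbb{F}^n}\sum_{u\in\mathbb{F}^n}\mathcal{W}_{F_\lambda}(u)^4.
\]

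Next I would isolate the trivial contributions. Because $F_0\equiv 0$, one has $\mathcal{W}_{F_0}(u)=2^n$ at $u=0$ and $0$ otherwise, so the $\lambda=0$ slice contributes $2^{4n}$ on the right; on the left, $\delta_F(0,b)$ equals $2^n$ at $b=0$ and vanishes otherwise, contributing $2^{2n}$. These cancel after multiplying through by $2^{-2n}$ and leave the clean identity $L_4(F)=2^{2n}\sum_{a\neq 0}\sum_{b\in\mathbb{F}^n}\delta_F(a,b)^2$. For the combinatorial bound, fix $a\neq 0$ and observe that $x\mapsto x+a$ is a fixed-point-free involution on $\{x:D_aF(x)=b\}$, so $\delta_F(a,b)$ is always even. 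Therefore $\delta_F(a,b)^2\geq 2\,\delta_F(a,b)$, with equality iff $\delta_F(a,b)\in\{0,2\}$. Summing over $b$ and using $\sum_b\delta_F(a,b)=2^n$ gives $\sum_b\delta_F(a,b)^2\geq 2^{n+1}$, and then summing over $a\neq 0$ produces $L_4(F)\geq 2^{2n}\cdot(2^n-1)\cdot 2^{n+1}=2^{3n+1}(2^n-1)$. Equality holds iff every nonzero $\delta_F(a,b)$ equals exactly $2$, which is precisely the APN condition.

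The main technical obstacle is the first step: carefully juggling the two Parseval dualities (one in the variable $b$, one in the variable $a$) while keeping track of which summations run over all of $\mathbb{F}^n$ and which must exclude $\lambda=0$ or $a=0$. Once the double-Fourier identity is in place, the concluding counting argument is elementary.
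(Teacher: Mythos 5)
Your proof is correct, but note that the paper itself offers no proof of Theorem~\ref{APN-momentum-vectorial}: it is stated in the subsection of ``known results'' and attributed to the literature (Chabaud--Vaudenay/Nyberg-type results, cf.\ \cite{Ber,Cant,Car1}), so there is nothing in the paper to compare against step by step. What you have reconstructed is essentially the classical argument: the double-Parseval identity $\sum_{a,b}\delta_F(a,b)^2=2^{-2n}\sum_{\lambda,u}\mathcal{W}_{F_\lambda}(u)^4$, removal of the trivial $\lambda=0$ and $a=0$ slices to get $L_4(F)=2^{2n}\sum_{a\neq 0,b}\delta_F(a,b)^2$, and then the elementary bound $\delta_F(a,b)^2\geq 2\,\delta_F(a,b)$ coming from the evenness of $\delta_F(a,b)$ (the fixed-point-free involution $x\mapsto x+a$), with equality exactly on $\{0,2\}$. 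All the individual steps check out, including the Wiener--Khinchin identity $\sum_a(-1)^{u\cdot a}\mathcal{F}(D_aF_\lambda)=\mathcal{W}_{F_\lambda}(u)^2$ and the bookkeeping of the cancelling contributions $2^{4n}$ and $2^{2n}$. Your evenness observation also cleanly delivers the paper's unproved claim that $\delta(F)\geq 2$ always holds, which is needed to convert ``$\delta_F(a,b)\in\{0,2\}$ for all $a\neq 0$'' into ``$\delta(F)=2$''. One small point worth flagging: the paper's displayed definition $L_k(F)=\sum_{\lambda\neq 0}L(F_\lambda)$ is a typo (it should read $L_k(F_\lambda)=\sum_{a}\mathcal{W}_{F_\lambda}(a)^k$, as is clear from the proof of Lemma~\ref{momentum-lemma-mf-vectorial}); you have correctly used the intended fourth-power-moment meaning, which is the one under which the theorem is true.
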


The following result can be easily deduced from Theorem \ref{APN-momentum-vectorial}.

\begin{theorem}\label{APN-first-order-derivatives}
	Let $F:\mathbb{F}^n\rightarrow \mathbb{F}^n$ be a vBf. Then \[\sum_{\lambda\neq 0,a\in\mathbb{F}^n}\mathcal{F}^2(D_aF_\lambda)\geq 2^{2n+1}(2^n-1).\] Moreover, $F$ is APN if and only if equality holds. 
\end{theorem}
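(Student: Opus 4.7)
The plan is to reduce Theorem~\ref{APN-first-order-derivatives} to Theorem~\ref{APN-momentum-vectorial} via the standard Fourier identity that links the fourth power moment of the Walsh transform of a Bf to the second power moment of the squared Fourier transforms of its first-order derivatives. Concretely, I expect to prove the component-wise identity
\[\sum_{a\in\mathbb{F}^n}\mathcal{W}_f(a)^4 \;=\; 2^n\sum_{a\in\mathbb{F}^n}\mathcal{F}^2(D_af)\]
for every $f\in B_n$, then apply it to each component $F_\lambda$ and sum over $\lambda\neq 0$ to translate the hypothesis and equality condition of Theorem~\ref{APN-momentum-vectorial} verbatim into the statement we want.

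The key calculation for the identity goes as follows. Starting from $\mathcal{W}_f(a)=\sum_{x}(-1)^{f(x)+a\cdot x}$, I would expand $\mathcal{W}_f(a)^2$, perform the change of variables $z=x+y$, and recognize the inner sum as $\mathcal{F}(D_zf)$, obtaining
\[\mathcal{W}_f(a)^2 \;=\; \sum_{z\in\mathbb{F}^n}(-1)^{a\cdot z}\,\mathcal{F}(D_zf).\]
Squaring again, summing over $a\in\mathbb{F}^n$, and using the orthogonality relation $\sum_a(-1)^{a\cdot(z_1+z_2)}=2^n\,[z_1=z_2]$ collapses the double sum and yields the claimed identity.

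Next, applying this to every $F_\lambda$ with $\lambda\neq 0$ and summing gives
\[L_4(F) \;=\; \sum_{\lambda\neq 0}\sum_{a\in\mathbb{F}^n}\mathcal{W}_{F_\lambda}(a)^4 \;=\; 2^n\sum_{\lambda\neq 0,\,a\in\mathbb{F}^n}\mathcal{F}^2(D_aF_\lambda).\]
Dividing the inequality $L_4(F)\geq 2^{3n+1}(2^n-1)$ of Theorem~\ref{APN-momentum-vectorial} by $2^n$ yields exactly $\sum_{\lambda\neq 0,a}\mathcal{F}^2(D_aF_\lambda)\geq 2^{2n+1}(2^n-1)$, and the equality case transfers because the transformation is an equality, not an inequality.

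The only real obstacle is clarifying that $L(F_\lambda)$ in the definition of $L_4(F)$ denotes the fourth moment $\sum_a\mathcal{W}_{F_\lambda}(a)^4$ (the notation in the excerpt is terse), and then carefully tracking the factor of $2^n$ through the moment identity; once the Fourier identity is written down, the remainder is bookkeeping and a direct application of Theorem~\ref{APN-momentum-vectorial}.
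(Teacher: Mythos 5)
Your proposal is correct and matches the paper's intent exactly: the paper gives no explicit proof, stating only that the result "can be easily deduced from Theorem~\ref{APN-momentum-vectorial}," and the deduction it has in mind is precisely your moment identity $\sum_{a}\mathcal{W}_{F_\lambda}^4(a)=2^n\sum_{a}\mathcal{F}^2(D_aF_\lambda)$ followed by division of $L_4(F)\geq 2^{3n+1}(2^n-1)$ by $2^n$. Your reading of the (typo-ridden) definition of $L_4(F)$ is also the one the paper itself uses in the proof of Lemma~\ref{momentum-lemma-mf-vectorial}, so nothing further is needed.
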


\subsection{The parameter $\mathcal{M}(f)$}\label{subsect-2}
We define and study some properties of a parameter for a Boolean function based on its second-order derivatives and in the next subsection we use it for characterization of quadratic and cubic APN functions.

\begin{definition}\label{mf}
	For $a\in\mathbb{F}^n$ and $f\in B_n$, define $\mathit{Z}_a(f):=\{b\in \mathbb{F}^n\mid D_bD_af= 0\}$, 
	$\mathit{U}_a(f):=\{b\in \mathbb{F}^n\mid D_bD_af= 1\}$ and $\mathcal{M}_a(f):=|\mathit{Z}_a(f)|-|\mathit{U}_a(f)|$. We define the parameter $\mathcal{M}(f)$ by  \[\mathcal{M}(f):=\sum\limits_{a\in \mathbb{F}^n\setminus\{0\}}\mathcal{M}_a(f).\]
\end{definition}

\begin{lemma}\label{equivalence-derivatives}
	Let $g_1,g_2\in B_n$ be such that $g_1=g_2(Mx+w)$, with invertible $M\in GL_n(\mathbb{F})$ and $w\in\mathbb{F}^n$. Then, for any $a\in\mathbb{F}^n\setminus\{0\}$, we have $D_ag_1\sim_A D_{M\cdot a}g_2$.
\end{lemma}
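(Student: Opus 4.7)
The plan is to unwind the definitions of first-order derivative and of the affine equivalence relation $\sim_A$, and observe that everything collapses into a one-line computation. Writing $\varphi(x) = Mx + w$, the hypothesis says simply $g_1 = g_2 \circ \varphi$, and since $M \in GL_n(\mathbb{F})$ the map $\varphi$ is genuinely an affinity (invertible affine bijection), which is precisely what the definition of $\sim_A$ in Section~\ref{sect-2} requires.

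Concretely, I would start from the definition $D_a g_1(x) = g_1(x+a) + g_1(x)$, substitute $g_1 = g_2 \circ \varphi$, and use linearity of $M$ to get
\[
D_a g_1(x) = g_2\bigl(M(x+a)+w\bigr) + g_2(Mx+w) = g_2\bigl((Mx+w) + Ma\bigr) + g_2(Mx+w).
\]
The right-hand side is exactly $(D_{Ma} g_2)(Mx+w) = (D_{Ma} g_2) \circ \varphi\,(x)$. Hence $D_a g_1 = (D_{Ma} g_2) \circ \varphi$, so by the very definition of affine equivalence, $D_a g_1 \sim_A D_{M \cdot a} g_2$.

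There is essentially no obstacle: the only point worth flagging is that the invertibility of $M$ is needed to conclude that $\varphi$ qualifies as an affinity in the sense used in the paper (and hence that $\sim_A$ applies), whereas the computation itself does not use invertibility. The hypothesis $a \neq 0$ is not used in the derivation either, but ensures that $Ma \neq 0$ so the statement is about genuine (nonzero) derivative indices on both sides.
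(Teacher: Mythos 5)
Your proof is correct and follows exactly the same route as the paper's: write $\varphi(x)=Mx+w$, expand $D_ag_1(x)=g_2(\varphi(x)+Ma)+g_2(\varphi(x))=(D_{Ma}g_2)(\varphi(x))$, and conclude affine equivalence. Your added remarks on where invertibility of $M$ and the hypothesis $a\neq 0$ actually matter are accurate and a slight bonus over the paper's presentation.
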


\begin{proof}
	Let $\varphi$ be the affinity of $\mathbb{F}^n$ associated with $M\in GL_n(\mathbb{F})$, $w\in\mathbb{F}^n$, that is, $\varphi(y)=M\cdot y+w$, for all $y\in \mathbb{F}^n$. For $a\in \mathbb{F}^n$, we have 
	\begin{align*}
	D_ag_1(x)&=D_a(g_2\circ \varphi)(x)\\&=g_2(\varphi(x+a))+g_2(\varphi(x))\\&=g_2(M\cdot (x+a)+w)+g_2(\varphi(x))\\&=g_2(M\cdot x+M\cdot a+w)+g_2(\varphi(x))\\&=g_2(M\cdot a+\varphi(x))+g_2(\varphi(x))\\&=D_{M\cdot a}g_2(\varphi(x))=(D_{M\cdot a}g_2\circ \varphi)(x).
	\end{align*}
	So it implies that $D_ag_1\sim_A D_{M\cdot a}g_2$.
\end{proof}

\begin{proposition}
	Let $f\in B_n$. Then, for all $a\in \mathbb{F}^n$,
	\begin{itemize}
		\item[(i)] $\mathit{Z}_a(f)$ is a vector space and has nonzero dimension,
		\item[(ii)] $\mathit{U}_a(f)$ is either a coset of $\mathit{Z}_a(f)$ or the empty set.
	\end{itemize}
\end{proposition}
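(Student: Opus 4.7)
The plan is to set $g := D_a f$ so that $D_b D_a f = D_b g$, and to reduce both parts to elementary statements about first-order derivatives of $g$. The single identity that drives everything is
\[
D_{b_1+b_2}\,g(x)\;=\;D_{b_1}g(x+b_2)\;+\;D_{b_2}g(x),
\]
which follows by expanding both sides from the definition $D_b g(x)=g(x)+g(x+b)$.

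For part (i), I would first observe that $0\in \mathit{Z}_a(f)$ trivially, and that if $b_1,b_2\in \mathit{Z}_a(f)$ then $D_{b_1}g\equiv 0$ and $D_{b_2}g\equiv 0$, whence the identity above forces $D_{b_1+b_2}g\equiv 0$, i.e.\ $b_1+b_2\in \mathit{Z}_a(f)$. Since we are in characteristic $2$, closure under addition together with $0\in \mathit{Z}_a(f)$ makes $\mathit{Z}_a(f)$ a linear subspace of $\mathbb{F}^n$. To get nonzero dimension, I would note that $a\in \mathit{Z}_a(f)$: indeed
\[
D_a D_a f(x)=f(x)+f(x+a)+f(x+a)+f(x+2a)=0,
\]
because $2a=0$ in $\mathbb{F}^n$. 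If $a\neq 0$ this already gives $\dim \mathit{Z}_a(f)\geq 1$, while if $a=0$ then $D_0 f\equiv 0$ and so $\mathit{Z}_0(f)=\mathbb{F}^n$.

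For part (ii), assume $\mathit{U}_a(f)\neq \emptyset$ and fix some $c\in \mathit{U}_a(f)$, so that $D_c g\equiv 1$. The containment $c+\mathit{Z}_a(f)\subseteq \mathit{U}_a(f)$ follows directly from the identity: for $b\in \mathit{Z}_a(f)$,
\[
D_{c+b}g(x)=D_c g(x+b)+D_b g(x)=1+0=1.
\]
Conversely, if $d\in \mathit{U}_a(f)$ then $D_d g\equiv 1$, and the same identity gives $D_{c+d}g(x)=1+1=0$, so $c+d\in \mathit{Z}_a(f)$ and hence $d\in c+\mathit{Z}_a(f)$. This shows $\mathit{U}_a(f)=c+\mathit{Z}_a(f)$, a coset.

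There is no serious obstacle here; the only nontrivial ingredient is the additive derivative identity, and once it is in hand both parts are bookkeeping. The mildest subtlety to flag is the $a=0$ case, which must be checked separately to confirm $\mathit{Z}_a(f)$ still has nonzero dimension (and vacuously confirms that $\mathit{U}_0(f)=\emptyset$ since $D_b D_0 f\equiv 0$ for every $b$).
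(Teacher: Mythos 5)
Your proof is correct and follows essentially the same route as the paper's: both rest on the derivative identity $D_{b_1+b_2}g(x)=D_{b_1}g(x+b_2)+D_{b_2}g(x)$ applied to $g=D_af$, with closure under addition, the observation that $a\in Z_a(f)$ (and the separate $a=0$ check) giving nonzero dimension, and the same two containments establishing the coset structure in part (ii). Your explicit handling of the $a=0$ case and the remark that $U_0(f)=\emptyset$ is a slightly cleaner presentation of a point the paper treats in passing, but the substance is identical.
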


\begin{proof}
	(i) It is clear that $0$ is in $Z_a(f)$ since $D_0D_a(f)=0$. Suppose we have $b_1,b_2\in Z_a(f)$. Then \[D_{b_1+b_2}D_af(x)=D_{b_1}f(x)+D_{b_2}D_af(x+b_1)=0+0=0,\] implying that $b_1+b_2\in Z_a(f)$ [note that we deduced that $D_{b_2}D_af(x+b_1)=0$ from Lemma \ref{equivalence-derivatives}]. To show that it is of nonzero dimension, observe that if $a=0$ then $Z_a(f)=\mathbb{F}^n$ and if $a\neq 0$, then we have $D_aD_af(x)=0$, implying that $\{0,a\}\subseteq Z_a(f)$. So the dimension of $Z_a(f)$ is at least $1$.
	
	(ii) Suppose that $U_a(f)\neq \varnothing$. For any $b_1\in U_a(f)$, we show that $b_1+Z_a(f)=U_a(f)$. Let $b_2=b_1+d$, with $d\in Z_a(f)$. We have \[D_{b_2}D_af(x)=D_{b_1+d}D_af(x)=D_{b_1}D_af(x)+D_dD_af(x+b_1)=1+0=1.\] Thus, $b_2\in U_a(f)$. Conversely, for $e\in U_a(f)$, we have \[D_{b_1+e}D_af(x)=D_{b_1}D_af(x)+D_eD_af(x+b_1)=1+1=0.\] It follows that $e+b_1\in Z_a(f)\implies e\in b_1+Z_a(f)$.
\end{proof}

\begin{proposition}\label{quadratic-cubic-mf}
	Let $f\in B_n$ be a Bf with $\deg(f)\in\{2,3\}$. Then, for some even integer $j$, with $1<j<n$ and any $a\in\mathbb{F}^n$, we have 
	\[\mathcal{M}_a(f)=
	\begin{cases}0 & \text{ if and only if $D_af$ balanced},\\
	2^n& \text{ if and only if $D_af$ is constant},\\
	2^{n-j} & \text{otherwise}.\\
	\end{cases}\] 
\end{proposition}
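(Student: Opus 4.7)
The plan is to reduce everything to the structure of $g := D_af$ as a single Boolean function. Since $\deg f \le 3$, we have $\deg g \le 2$ and hence $\deg(D_b g) \le 1$, so $D_b D_a f = D_b g$ is either a constant ($0$ or $1$) or a nonconstant affine function (which is balanced and hence contributes nothing to $Z_a(f)$ or $U_a(f)$). It follows that only $b \in V(g)$ contribute to $Z_a(f) \cup U_a(f)$ and they partition these two sets according to the constant value of $D_b g$. In particular $\mathcal{M}_a(f)$ depends only on $g$ up to affine equivalence, since by Lemma~\ref{equivalence-derivatives} a change of coordinates on $g$ induces a compatible bijection between the pairs $(Z, U)$.

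Next I would split into three cases on $g$. \emph{Case 1:} $g$ is constant. Then $D_b g = 0$ for every $b$, so $Z_a(f) = \mathbb{F}^n$, $U_a(f) = \varnothing$, and $\mathcal{M}_a(f) = 2^n$. \emph{Case 2:} $g$ is balanced and nonconstant. If $\deg g = 1$, then $V(g) = \mathbb{F}^n$ and $D_b g$ is a nonzero linear function of $b$, so $|Z_a(f)| = |U_a(f)| = 2^{n-1}$. If $\deg g = 2$, Theorem~\ref{quadratic}(i) lets me take $g = x_1 x_2 + \cdots + x_{2k-1} x_{2k} + x_{2k+1}$; a direct computation shows $V(g) = \{b : b_1 = \cdots = b_{2k} = 0\}$ and $D_b g = b_{2k+1}$ on $V(g)$, so the two sets split evenly and $\mathcal{M}_a(f) = 0$. \emph{Case 3:} $g$ is unbalanced and nonconstant; then $g$ must be quadratic (affine functions are constant or balanced), so Theorem~\ref{quadratic}(ii) puts $g = x_1 x_2 + \cdots + x_{2k-1} x_{2k} + c$, for which $V(g) = \{b : b_1 = \cdots = b_{2k} = 0\}$ and $D_b g \equiv 0$ on $V(g)$. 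Thus $U_a(f) = \varnothing$, $\mathcal{M}_a(f) = |V(g)| = 2^{n-2k}$, and $j := 2k$ is even with $j \ge 2$.

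The main obstacle is securing $j < n$ in Case 3. This comes from the characteristic-$2$ identity $D_a D_a f(x) = f(x) + f(x+a) + f(x+a) + f(x+2a) = 0$, which says $a \in V(D_a f) = V(g)$. Case 3 excludes $a = 0$ (otherwise $g \equiv 0$, a Case 1 situation), so $\dim V(g) \ge 1$, i.e., $2k \le n-1$, whence $j < n$. The three "iff" statements now follow from the case partition, because the three candidate values $2^n$, $0$ and $2^{n-j}$ are pairwise distinct under $0 < j < n$, and Cases 1--3 form a mutually exclusive and exhaustive description of the possibilities for $D_a f$.
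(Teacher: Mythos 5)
Your proof is correct and follows essentially the same route as the paper: a case analysis on $D_af$ (constant, nonconstant affine, balanced quadratic, unbalanced quadratic) using the normal forms of Theorem~\ref{quadratic} to count $|Z_a(f)|$ and $|U_a(f)|$. The one genuine addition is your observation that $D_aD_af=0$ forces $a\in V(D_af)$ and hence $j=2k<n$ in the unbalanced case --- a point the paper's proof leaves unjustified even though the statement asserts $j<n$.
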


\begin{proof}
	Since $\deg(f)\in\{2,3\}$ then $\deg(D_af)\in\{0,1,2\}$. It is clear from the definition of $\mathcal{M}_a(f)$ that \(\deg(D_af)=0\iff \mathcal{M}_a(f)=2^n.\)
	
	Suppose that $\deg(D_af)=1$. Then $D_af(x)$ is a non-constant affine function, so it is balanced. That is, we can write $D_af(x)=v\cdot x+c$, for some $v\in\mathbb{F}^n\setminus\{0\}$ and $c\in\mathbb{F}$. Observe that 
	\begin{align*}
	D_bD_af(x)&=v\cdot x+c+v\cdot (x+b)+c\\&=v\cdot x+v\cdot x+v\cdot b\\&=v\cdot b.
	\end{align*} So $D_bD_af(x)=0\iff b\in W=<v>^{\perp}$ and $D_bD_af(x)=1\iff b\in W^c$ ($A^\perp$ denotes the dual set and $A^c$ denotes the complement of a set $A$). Thus, $Z_a(f)=W$ and $U_a(f)=W^c$. It is clear that $|W|=|W^c|=2^{n-1}$. So we have $\mathcal{M}_a(f)=0$.
	
	Finally, suppose that $\deg(D_af)=2$, that is, by Theorem \ref{quadratic}, we know that $D_af\sim_A x_1x_2+\cdots +x_{2i-1}x_{2i}+x_{2i+1}$, with $i \leq \lfloor(n-1)/2\rfloor$, if $D_af$ is balanced and $D_af\sim_A x_1x_2+\cdots +x_{2i-1}x_{2i}+e$, with $i \leq \lfloor n/2\rfloor$ and $e\in \mathbb{F}$, if $D_af$ is unbalanced. Suppose that $D_af$ is balanced. Then \[|Z_a(f)|=|\{c=(c_1,...,c_n)\in\mathbb{F}^n\mid c_1=\cdots=c_{2i+1}=0\}|\] and  \[|U_a(f)|=|\{c=(c_1,...,c_n)\in\mathbb{F}^n\mid c_1=\cdots=c_{2i}=0, c_{2i+1}=1\}|.\] Observe that in both cases, $|Z_a(f)|=|U_a(f)|=2^{n-2i-1}$. Hence $\mathcal{M}_a(f)=0$. Now suppose that $D_af$ is unbalanced. Then we have \[|Z_a(f)|=|\{c=(c_1,...,c_n)\in\mathbb{F}^n\mid c_1=\cdots=c_{2i}=0\}|\] and  \(U_a(f)=\varnothing\). It follows that $|Z_a(f)|=2^{n-2i}$ and $|U_a(f)|=0$.  So it implies that $\mathcal{M}_a(f)=2^{n-2i}$.
\end{proof}

\begin{proposition}\label{mf-partially-bent}
	For any quadratic and cubic partially-bent function $f$, we have \[\mathcal{M}(f)=2^n(2^k-1),\] where $k=\dim V(f)$.
\end{proposition}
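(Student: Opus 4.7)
The plan is to combine the structure theorem for partially-bent functions with Proposition \ref{quadratic-cubic-mf} to evaluate $\mathcal{M}_a(f)$ elementwise, then sum.

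First I would recall the direct-sum decomposition: by the definition stated just after Proposition \ref{equivalence-properties}'s discussion of partial-bentness, we can write $\mathbb{F}^n = U \oplus V(f)$ with $\dim V(f) = k$, $f$ bent on $U$, and $f$ affine on $V(f)$, so that $f(y+z) = f(y) + f(z)$ for $y \in U$, $z \in V(f)$. For $a = a_U + a_V$ with $a_U \in U$ and $a_V \in V(f)$, and $x = y+z$ decomposed similarly, a short calculation gives
\begin{equation*}
D_a f(x) \;=\; D_{a_U}\!\bigl(f_{\restriction U}\bigr)(y) \;+\; D_{a_V}\!\bigl(f_{\restriction V(f)}\bigr)(z),
\end{equation*}
where the second summand is a constant (since $f_{\restriction V(f)}$ is affine).

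Next I would split the sum defining $\mathcal{M}(f)$ according to whether $a \in V(f)$ or not. If $a \in V(f)\setminus\{0\}$, then by definition $D_a f$ is constant, so Proposition \ref{quadratic-cubic-mf} yields $\mathcal{M}_a(f) = 2^n$. If $a \notin V(f)$, then $a_U \neq 0$, and since $f_{\restriction U}$ is bent, Theorem \ref{bent-thm} gives that $D_{a_U}(f_{\restriction U})$ is balanced; adding a constant preserves balancedness, so $D_a f$ is balanced, and Proposition \ref{quadratic-cubic-mf} gives $\mathcal{M}_a(f) = 0$.

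Finally I would conclude by counting: there are exactly $2^k - 1$ nonzero elements in $V(f)$, each contributing $2^n$, while all other nonzero $a$ contribute $0$. Hence
\begin{equation*}
\mathcal{M}(f) \;=\; \sum_{a \in V(f)\setminus\{0\}} 2^n \;+\; \sum_{a \notin V(f)} 0 \;=\; 2^n(2^k - 1).
\end{equation*}
I do not anticipate a real obstacle: the only subtlety is checking that the characterization in Proposition \ref{quadratic-cubic-mf} (stated for quadratic or cubic $f$) dovetails with the partially-bent decomposition, but since partial-bentness is exactly what makes $D_a f$ either constant (for $a \in V(f)$) or balanced (otherwise), the two ingredients fit together cleanly.
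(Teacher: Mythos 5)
Your proposal is correct and follows essentially the same route as the paper: both reduce the problem to Proposition \ref{quadratic-cubic-mf} and the observation that for a partially-bent $f$, the derivative $D_af$ is constant exactly when $a\in V(f)$ and balanced otherwise, then count the $2^k-1$ nonzero linear structures. The only difference is that you spell out explicitly, via the direct-sum decomposition $\mathbb{F}^n=U\oplus V(f)$ and Theorem \ref{bent-thm}, why $D_af$ is balanced for $a\notin V(f)$, a step the paper simply asserts as following ``from the definition.''
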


\begin{proof}
	We know, from Proposition \ref{quadratic-cubic-mf}, that $\mathcal{M}_a(f)=0$ if and only if $D_af$ is balanced and $\mathcal{M}_a(f)=2^n$ if and only if $D_af$ is a constant. We deduce, from the definition, that for any partially-bent function $f$, $D_af$ is constant if and only if $a\in V(f)$ and $D_af$ is balanced if and only if $a\notin V(f)$. Recall that all quadratic functions are partially-bent. Thus, for any quadratic function or cubic partially-bent function $f$, we have \[\mathcal{M}(f)=\sum_{a\in \mathbb{F}^n\setminus\{0\}}\mathcal{M}_a(f)=\sum_{a\in V(f)\setminus\{0\}}\mathcal{M}_a(f)=2^n(2^k-1),\] with $k=\dim V(f)$. 
\end{proof}

If a function $f$ is bent, then $k=0$ and so, by Proposition \ref{mf-partially-bent}, $\mathcal{M}(f)=0$. Thus, we state this in the following.  

\begin{corollary}\label{bent-mf}
	Let $f\in B_n$ be a quadratic or cubic function. Then $f$ is bent if and only if $\mathcal{M}(f)=0$.
\end{corollary}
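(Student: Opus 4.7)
The plan is to prove both directions directly from Proposition \ref{quadratic-cubic-mf} and Theorem \ref{bent-thm}, bypassing the partially-bent hypothesis of Proposition \ref{mf-partially-bent}. The key observation to set up first is that since $\deg(f)\in\{2,3\}$, the first-order derivative $D_af$ has degree at most $2$ for every $a\in\mathbb{F}^n$, so Proposition \ref{quadratic-cubic-mf} applies termwise to each summand of $\mathcal{M}(f)$.

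From that proposition one reads off that $\mathcal{M}_a(f)$ takes values in $\{0,\,2^{n-j},\,2^n\}$ (with $j$ an even integer satisfying $1<j<n$), all of which are nonnegative, and $\mathcal{M}_a(f)=0$ precisely when $D_af$ is balanced. With this in hand, the two directions become immediate. If $f$ is bent, Theorem \ref{bent-thm} ensures that $D_af$ is balanced for every nonzero $a$, so each $\mathcal{M}_a(f)$ vanishes and hence $\mathcal{M}(f)=0$. Conversely, if $\mathcal{M}(f)=0$, then since $\mathcal{M}(f)=\sum_{a\neq 0}\mathcal{M}_a(f)$ is a sum of nonnegative integers, every term must vanish, forcing $D_af$ to be balanced for all nonzero $a$, and hence $f$ to be bent by Theorem \ref{bent-thm}.

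There is essentially no obstacle here; the only subtlety is a logical one. The lead-in to the corollary derives the forward direction via Proposition \ref{mf-partially-bent}, which only covers \emph{partially-bent} quadratic or cubic functions, so one must be careful not to silently assume partial-bentness when proving the converse. Invoking Proposition \ref{quadratic-cubic-mf} directly keeps the argument short and handles arbitrary quadratic or cubic $f$ uniformly.
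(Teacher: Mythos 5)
Your proof is correct, and it follows exactly the route that the paper itself only mentions in passing in the remark after the corollary (``the result \dots can also be deduced by Theorem~\ref{bent-thm} and Proposition~\ref{quadratic-cubic-mf}''). The paper's primary derivation, given in the sentence preceding the corollary, instead sets $k=\dim V(f)=0$ in Proposition~\ref{mf-partially-bent}; that cleanly yields only the forward implication and, as you note, presupposes that $f$ is partially-bent, which is automatic for quadratic $f$ but not for an arbitrary cubic $f$. Your argument --- observing that every $\mathcal{M}_a(f)$ is nonnegative by Proposition~\ref{quadratic-cubic-mf}, so the sum vanishes iff every term does, iff every $D_af$ with $a\neq 0$ is balanced, iff $f$ is bent by Theorem~\ref{bent-thm} --- handles both directions uniformly for all quadratic and cubic $f$, so it is the more complete justification of the statement as written.
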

Observe that the result in Corollary \ref{bent-mf} can also be deduced by Theorem~\ref{bent-thm} and Proposition~\ref{quadratic-cubic-mf}.

\begin{lemma}\label{semi-bent-V(f)}
	Let $f\in B_n$, with $n$ odd, be quadratic. Then $\dim V(f)\geq 1$ and equality holds if and only if $f$ is semi-bent.
\end{lemma}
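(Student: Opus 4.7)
The plan is to establish the inequality from the parity constraint in Remark \ref{quadratic-partially-bent}, then identify the equality case by reducing to the canonical form in Theorem \ref{quadratic} and matching it against the nonlinearity formula in Theorem \ref{nonlinearity-quadratics}.

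First I would observe that $V(f)$ is an affine invariant. Indeed, if $\varphi(x) = Mx + w$ is an affinity, a short computation shows $D_a(f\circ\varphi)(x) = (D_{Ma}f)(\varphi(x))$, so $a\in V(f\circ \varphi)$ iff $Ma \in V(f)$; invertibility of $M$ then gives $\dim V(f\circ\varphi) = \dim V(f)$. Combined with Remark \ref{quadratic-partially-bent}, which states that any quadratic $f$ is partially-bent with $\dim V(f)$ of the same parity as $n$, this yields $\dim V(f) \geq 1$ immediately when $n$ is odd.

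For the equivalence with semi-bentness, Theorem \ref{quadratic} reduces $f$ up to affine equivalence to one of the canonical forms
\[
x_1 x_2 + \cdots + x_{2k-1} x_{2k} + x_{2k+1} \qquad\text{or}\qquad x_1 x_2 + \cdots + x_{2k-1} x_{2k} + c,
\]
with $k \le (n-1)/2$ since $n$ is odd. On each canonical form one computes $V(f)$ directly: the vectors $e_j$ for the variables not appearing in any quadratic term lie in $V(f)$, and in the balanced case the extra basis vector $e_{2k+1}$ also lies in $V(f)$ because $D_{e_{2k+1}}f \equiv 1$ is constant. To rule out further directions, write any candidate $a = (a', a'')$ with $a' \in \mathbb{F}^{2k}$; since $q(x_1,\dots,x_{2k}) = x_1x_2+\cdots+x_{2k-1}x_{2k}$ is bent, Theorem \ref{bent-thm} forces $D_{a'}q$ to be balanced whenever $a'\ne 0$, incompatible with $D_a f$ being constant. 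Hence $\dim V(f) = n - 2k$ in both cases.

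Consequently $\dim V(f) = 1$ iff $k = (n-1)/2$. By Theorem \ref{nonlinearity-quadratics}, $\mathcal{N}(f) = 2^{n-1} - 2^{n-k-1}$, so this condition is equivalent to $\mathcal{N}(f) = 2^{n-1} - 2^{(n-1)/2}$, which is exactly the definition of semi-bent. The only mildly delicate step is recognizing that the lone linear term $x_{2k+1}$ in the balanced canonical form contributes $e_{2k+1}$ to $V(f)$ (its derivative being the nonzero constant $1$, not $0$); this is where a hasty reading might misidentify $\dim V(f)$ as $n-2k-1$ instead of $n-2k$ and obtain the wrong threshold for semi-bentness.
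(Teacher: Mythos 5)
Your proof is correct and follows essentially the same route as the paper: reduce $f$ to the canonical form of Theorem \ref{quadratic}, compute $\dim V(f)=n-2k$ directly (noting, as the paper implicitly does, that the linear term contributes its direction to $V(f)$ since a derivative identically equal to $1$ is still constant), and match $k=(n-1)/2$ against the nonlinearity formula of Theorem \ref{nonlinearity-quadratics}. Your added justifications --- the affine invariance of $V(f)$ and the use of Theorem \ref{bent-thm} to rule out extra linear structures --- only make explicit what the paper leaves as an observation.
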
 

\begin{proof}
	From Theorem \ref{quadratic}, observe that \[|V(f)|=|\{c=(c_1,...,c_n)\in\mathbb{F}^n| c_1=\cdots=c_{2i}=0, i\leq (n-1)/2\}|.\] It follows that $|V(f)|=2^{n-2i}$. Since $n$ is odd, so we must have $\dim V(f)\geq 1$.  It can be observed, from Theorem \ref{nonlinearity-quadratics}, that $f$ is semi-bent $\iff f\sim_A x_1x_2+\cdots +x_{n-2}x_{n-1}+x_{n}$  or $f\sim_A x_1x_2+\cdots +x_{n-2}x_{n-1}+c$, with $c\in\mathbb{F}$, from which we deduce that $f$ is semi-bent  $\iff \dim V(f)=1$.
\end{proof}

By Theorem \ref{quadratic} and Lemma \ref{semi-bent-V(f)}, the following corollary holds. 
\begin{corollary}\label{semi-bent-mf}
	For $n$ odd, a quadratic Bf $f$ is semi-bent if and only if $\mathcal{M}(f)=2^n$.
\end{corollary}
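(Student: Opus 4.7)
The plan is to combine Proposition~\ref{mf-partially-bent} with Lemma~\ref{semi-bent-V(f)} directly, since together they translate the statement into an arithmetic equivalence.

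First, I would invoke Remark~\ref{quadratic-partially-bent} to observe that a quadratic $f \in B_n$ is automatically partially-bent, so Proposition~\ref{mf-partially-bent} applies and gives $\mathcal{M}(f) = 2^n(2^k - 1)$ with $k = \dim V(f)$. Next, since $n$ is odd, Lemma~\ref{semi-bent-V(f)} tells us that $\dim V(f) \geq 1$, with equality precisely when $f$ is semi-bent. Combining these, I would argue: $f$ is semi-bent $\iff \dim V(f) = 1 \iff 2^n(2^k - 1) = 2^n(2^1 - 1) = 2^n \iff \mathcal{M}(f) = 2^n$.

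The only subtlety to flag is the injectivity of the map $k \mapsto 2^n(2^k - 1)$ on nonnegative integers, which is immediate since $2^k$ is strictly increasing in $k$; thus $\mathcal{M}(f) = 2^n$ forces $k = 1$ and no other value of $\dim V(f)$ can produce this quantity. There is no real obstacle here since both ingredients are already proved; the corollary is a two-line composition, which is why the excerpt labels it a corollary rather than a theorem.
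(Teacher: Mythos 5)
Your proof is correct and follows essentially the route the paper intends: the paper states the corollary as an immediate consequence of its classification and dimension results, and your argument via Proposition~\ref{mf-partially-bent} (applicable since quadratics are partially-bent) combined with Lemma~\ref{semi-bent-V(f)} is exactly the natural way to fill in that one-line deduction. The injectivity observation for $k\mapsto 2^n(2^k-1)$ is the right point to flag for the converse direction, and it holds as you say.
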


\subsection{APN functions and their second-order derivatives}
For a vBf $F:\mathbb{F}^n\rightarrow\mathbb{F}^n$, define \(\mathcal{M}(F)=\sum_{\lambda\neq 0\in \mathbb{F}^n}\mathcal{M}(F_\lambda).\)  It is clear from Subsection~\ref{subsect-2} that the quantity $\mathcal{M}(F)$ is defined based on second-order derivatives of components of $F$. We establish a connection between the fourth power moment of the Walsh transform and the value \(\mathcal{M}(F)\), and consequently derive a characterization of quadratic and cubic APN functions based on the latter quantity. 

\begin{lemma}\label{momentum-lemma-mf-vectorial}
	Let \(F:\mathbb{F}^n\rightarrow \mathbb{F}^n\) be a vBf of $\deg(F)\in\{2,3\}$. Then \[L_4(F)=2^{3n}(2^n-1)+2^{2n}\mathcal{M}(F).\]
\end{lemma}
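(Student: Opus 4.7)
The plan is to express $L_4(F)$ via a chain of identities that replace fourth powers of Walsh values by sums involving second-order derivatives, and then use the hypothesis $\deg(F)\in\{2,3\}$ to evaluate each contribution in terms of $\mathcal{M}_b(F_\lambda)$.

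First I would establish, for any $f\in B_n$, the identity
\[
\sum_{a\in\mathbb{F}^n}\mathcal{W}_f(a)^4 \;=\; 2^n\sum_{b\in\mathbb{F}^n}\mathcal{F}(D_bf)^2.
\]
This is obtained by writing $\mathcal{W}_f(a)^2=\sum_b(-1)^{a\cdot b}\mathcal{F}(D_bf)$ (substitute $y=x+b$ in the double sum defining $\mathcal{W}_f(a)^2$), squaring, summing over $a$, and applying the orthogonality relation $\sum_a(-1)^{a\cdot(b+c)}=2^n[b=c]$. Next, by the same substitution trick,
\[
\mathcal{F}(D_bf)^2 \;=\; \sum_{c\in\mathbb{F}^n}\mathcal{F}(D_cD_bf).
\]

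The crucial input of the degree hypothesis comes at this point. Since $\deg(F_\lambda)\le 3$, each second-order derivative $D_cD_bF_\lambda$ has degree at most one, hence is either the constant $0$, the constant $1$, or a non-constant affine function which (by balancedness of non-constant affine functions) satisfies $\mathcal{F}(D_cD_bF_\lambda)=0$. Therefore
\[
\sum_{c}\mathcal{F}(D_cD_bF_\lambda) \;=\; 2^n|Z_b(F_\lambda)|-2^n|U_b(F_\lambda)| \;=\; 2^n\,\mathcal{M}_b(F_\lambda),
\]
which combined with the first two identities yields
\[
\sum_{a}\mathcal{W}_{F_\lambda}(a)^4 \;=\; 2^{2n}\sum_{b}\mathcal{M}_b(F_\lambda).
\]

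Finally I would sum over $\lambda\neq 0$ and isolate the $b=0$ term. Since $D_0F_\lambda\equiv 0$, every second-order derivative $D_cD_0F_\lambda$ vanishes, so $Z_0(F_\lambda)=\mathbb{F}^n$, $U_0(F_\lambda)=\varnothing$, and $\mathcal{M}_0(F_\lambda)=2^n$. Consequently
\[
\sum_{b\in\mathbb{F}^n}\mathcal{M}_b(F_\lambda) \;=\; 2^n+\mathcal{M}(F_\lambda),
\]
and summing this over the $2^n-1$ nonzero components gives
\[
L_4(F) \;=\; 2^{2n}\!\sum_{\lambda\ne 0}\bigl(2^n+\mathcal{M}(F_\lambda)\bigr) \;=\; 2^{3n}(2^n-1)+2^{2n}\mathcal{M}(F),
\]
which is the desired formula. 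The main obstacle is not deep but lies in the first identity relating $\sum_a\mathcal{W}_f(a)^4$ to $\sum_b\mathcal{F}(D_bf)^2$; once that and the analogous step for $\mathcal{F}(D_bf)^2$ are in hand, the degree hypothesis does all the remaining work by collapsing the inner sum to $\mathcal{M}_b(F_\lambda)$.
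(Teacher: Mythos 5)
Your proof is correct and follows essentially the same route as the paper's: both reduce $\sum_{a}\mathcal{W}_{F_\lambda}(a)^4$ to $2^n\sum_{b,c}\mathcal{F}(D_cD_bF_\lambda)$ via the substitutions $y=x+b$, $z=x+c$, then use the degree hypothesis to discard the balanced (non-constant affine) second-order derivatives, and finally split off the $b=0$ contribution of $2^n$ per nonzero component. The only difference is organizational: you factor the computation through the intermediate identities $\sum_a\mathcal{W}_f(a)^4=2^n\sum_b\mathcal{F}(D_bf)^2$ and $\mathcal{F}(D_bf)^2=\sum_c\mathcal{F}(D_cD_bf)$, whereas the paper carries out the quadruple sum in a single chain.
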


\begin{proof}
We have
\begingroup
\allowdisplaybreaks
	\begin{align}\label{momentum-fourier-equation}
	L_4(F)&=\sum_{\lambda\neq 0\in\mathbb{F}^n}L_4(F_\lambda)=\sum_{\lambda\neq 0\in\mathbb{F}^n}\sum_{a\in \mathbb{F}^n}\mathcal{W}_{F_\lambda}^4 (a)\nonumber\\&=\sum_{\lambda\neq 0\in\mathbb{F}^n}\sum_{a\in\mathbb{F}^n}\sum_{x,y,z,w\in \mathbb{F}^n}(-1)^{F_\lambda(x)+F_\lambda(y)+F_\lambda(z)+F_\lambda(w)+a\cdot (x+y+z+w)}\nonumber\\&=\sum_{\lambda\neq 0\in\mathbb{F}^n}\sum_{a\in\mathbb{F}^n}\sum_{x,y,z,w\in \mathbb{F}^n}(-1)^{F_\lambda(x)+F_\lambda(y)+F_\lambda(z)+F_\lambda(w)}(-1)^{a\cdot (x+y+z+w)}\nonumber\\&=\sum_{\lambda\neq 0\in \mathbb{F}^n}\sum_{x,y,z,w\in\mathbb{F}^n}(-1)^{F_\lambda(x)+F_\lambda(y)+F_\lambda(z)+F_\lambda(w)}\sum_{a\in \mathbb{F}^n}(-1)^{a\cdot (x+y+z+w)}\nonumber\\&=\sum_{\lambda\neq 0\in \mathbb{F}^n}\sum_{x,y,z,w\in\mathbb{F}^n |x+y+z+w=0}2^n(-1)^{F_\lambda(x)+F_\lambda(y)+F_\lambda(z)+F_\lambda(w)}\nonumber\\&=2^n\sum_{\lambda\neq 0\in \mathbb{F}^n}\sum_{x,y,z,w\in\mathbb{F}^n |w=x+y+z}(-1)^{F_\lambda(x)+F_\lambda(y)+F_\lambda(z)+F_\lambda(w)}\nonumber\\&=2^n\sum_{\lambda\neq 0\in \mathbb{F}^n}\sum_{x,y,z\in\mathbb{F}^n }(-1)^{F_\lambda(x)+F_\lambda(y)+F_\lambda(z)+F_\lambda(x+y+z)}\nonumber\\&
	\left(\text{substituting $y=x+b$ and $z=x+c$ we have}\right)\nonumber\\& =2^n\sum_{\lambda\neq 0\in\mathbb{F}^n}\sum_{x,b,c\in \mathbb{F}^n}(-1)^{F_\lambda(x)+F_\lambda(x+b)+F_\lambda(x+c)+F_\lambda(x+b+c)}\nonumber\\&=2^n\sum_{\lambda\neq 0\in\mathbb{F}^n}\sum_{x,b,c\in \mathbb{F}^n}(-1)^{D_bF_\lambda(x)+D_bF_\lambda(x+c)}\nonumber\\&=2^n\sum_{\lambda\neq 0\in\mathbb{F}^n}\sum_{x,b,c\in \mathbb{F}^n}(-1)^{D_cD_bF_\lambda(x)}\\& (\deg(D_cD_bF_\lambda)=1\implies \sum_{x\in \mathbb{F}^n}(-1)^{D_cD_bF_\lambda(x)}=0, \text{ so we have}) \nonumber\\&=2^n\sum_{\lambda\neq 0\in\mathbb{F}^n}\sum_{x,b,c\in \mathbb{F}^n|\deg(D_cD_bF_\lambda)=0}(-1)^{D_cD_bF_\lambda(x)}\nonumber\\&=2^n\sum_{\lambda\neq 0\in\mathbb{F}^n}2^n\sum_{b,c\in \mathbb{F}^n|\deg(D_cD_bF_\lambda)=0}(-1)^{D_cD_bF_\lambda(0)}\nonumber\\&= 2^{2n}\sum_{\lambda\neq 0\in\mathbb{F}^n}\left(\sum_{b,c\in \mathbb{F}^n|D_cD_bF_\lambda=0}(-1)^0+\sum_{b,c\in \mathbb{F}^n|D_cD_bF_\lambda=1}(-1)^1\right)\nonumber\\&=2^{2n}\sum_{\lambda\neq 0\in\mathbb{F}^n}\left(|\{b,c\in \mathbb{F}^n\mid D_cD_bF_\lambda=0\}|-|\{b,c\in \mathbb{F}^n\mid D_cD_bF_\lambda=1\}|\right)\nonumber\\&=2^{2n}\sum_{\lambda\neq 0\in\mathbb{F}^n}\sum_{b\in\mathbb{F}^n}\left(|\{c\in \mathbb{F}^n\mid D_cD_bF_\lambda=0\}|-|\{c\in \mathbb{F}^n\mid D_cD_bF_\lambda=1\}|\right)\nonumber\\&=2^{2n}\sum_{\lambda\neq 0\in\mathbb{F}^n}\left(|\{c\in \mathbb{F}^n\mid D_cD_0F_\lambda=0\}|-|\{c\in \mathbb{F}^n\mid D_cD_0F_\lambda=1\}|\right)\nonumber\\&+2^{2n}\sum_{\lambda\neq 0\in\mathbb{F}^n}\sum_{b\neq 0\in\mathbb{F}^n}\mathcal{M}_b(F_\lambda)\nonumber\\&=2^{2n}\sum_{\lambda\neq 0\in\mathbb{F}^n}\left(|\{c\in \mathbb{F}^n\mid D_c(0)=0\}|-|\{c\in \mathbb{F}^n\mid D_c(0)=1\}|\right)\nonumber\\&+2^{2n}\sum_{\lambda\neq 0\in\mathbb{F}^n}\sum_{b\neq 0\in\mathbb{F}^n}\mathcal{M}_b(F_\lambda)\nonumber\\&=2^{2n}\sum_{\lambda\neq 0\in\mathbb{F}^n}\left(2^n-0\right)+2^{2n}\sum_{\lambda\neq 0\in\mathbb{F}^n}\sum_{b\neq 0\in\mathbb{F}^n}\mathcal{M}_b(F_\lambda)\nonumber\\&=2^{2n}\sum_{\lambda\neq 0\in\mathbb{F}^n}2^n+2^{2n}\sum_{\lambda\neq 0\in\mathbb{F}^n}\sum_{b\neq 0\in\mathbb{F}^n}\mathcal{M}_b(F_\lambda)\nonumber\\&=2^{3n}(2^n-1)+2^{2n}\sum_{\lambda\neq 0\in\mathbb{F}^n}\sum_{b\neq 0\in \mathbb{F}^n}\mathcal{M}_b(F_\lambda)\nonumber\\&=2^{3n}(2^n-1)+2^{2n}\sum_{\lambda\neq 0\in\mathbb{F}^n}\mathcal{M}(F_\lambda)=2^{3n}(2^n-1)+2^{2n}\mathcal{M}(F)\nonumber. \qedhere
\end{align}
\endgroup
\end{proof}

We deduce by Lemma \ref{momentum-lemma-mf-vectorial} and Theorem \ref{APN-momentum-vectorial} that the following theorem holds.

\begin{theorem}\label{mf-cubic-quadratic-APN}
	Let $F:\mathbb{F}^n\rightarrow \mathbb{F}^n$ be a vBf with $\deg(F)\in \{2,3\}$. Then \[\mathcal{M}(F)\geq 2^{n}(2^n-1).\] Moreover, $F$ is APN if and only if equality holds.
\end{theorem}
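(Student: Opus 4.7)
The plan is to deduce the statement directly by combining the two preceding results: the closed-form evaluation of the fourth Walsh power moment furnished by Lemma \ref{momentum-lemma-mf-vectorial}, and the APN characterization via the fourth power moment recorded in Theorem \ref{APN-momentum-vectorial}. Since the hypothesis $\deg(F)\in\{2,3\}$ is exactly what activates Lemma \ref{momentum-lemma-mf-vectorial}, we are entitled to use the identity $L_4(F)=2^{3n}(2^n-1)+2^{2n}\mathcal{M}(F)$.

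First I would substitute this identity into the bound $L_4(F)\geq 2^{3n+1}(2^n-1)$ coming from Theorem \ref{APN-momentum-vectorial}. That yields
\[
2^{3n}(2^n-1)+2^{2n}\mathcal{M}(F)\;\geq\;2^{3n+1}(2^n-1).
\]
Subtracting $2^{3n}(2^n-1)$ from both sides gives $2^{2n}\mathcal{M}(F)\geq 2^{3n}(2^n-1)$, and dividing through by $2^{2n}$ produces the desired inequality $\mathcal{M}(F)\geq 2^n(2^n-1)$.

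For the equivalence, I would observe that every step above is reversible: the formula of Lemma \ref{momentum-lemma-mf-vectorial} is an equality, and the two algebraic manipulations are bijections on $\mathbb{Z}$. Hence $\mathcal{M}(F)=2^n(2^n-1)$ if and only if $L_4(F)=2^{3n+1}(2^n-1)$, which by Theorem \ref{APN-momentum-vectorial} is equivalent to $F$ being APN.

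There is essentially no obstacle here, as all the work has already been absorbed into Lemma \ref{momentum-lemma-mf-vectorial}; the only point deserving a brief remark is the role of the degree hypothesis. The assumption $\deg(F)\in\{2,3\}$ is not invoked again in this proof but is inherited from Lemma \ref{momentum-lemma-mf-vectorial}, where it ensures that $\deg(D_cD_bF_\lambda)\leq 1$ for all $\lambda,b,c$, so that the contribution of the non-constant second-order derivatives to $L_4(F)$ vanishes and one is left only with the terms counted by $\mathcal{M}(F)$.
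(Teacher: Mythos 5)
Your proposal is correct and is exactly the argument the paper intends: the paper states the theorem as an immediate deduction from Lemma \ref{momentum-lemma-mf-vectorial} and Theorem \ref{APN-momentum-vectorial}, and your substitution of $L_4(F)=2^{3n}(2^n-1)+2^{2n}\mathcal{M}(F)$ into the bound $L_4(F)\geq 2^{3n+1}(2^n-1)$, together with the observation that the manipulations are reversible for the equality case, is precisely that deduction spelled out. Your closing remark on the role of the degree hypothesis is also accurate.
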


 By Theorem \ref{mf-cubic-quadratic-APN}, the following corollary holds.   

\begin{corollary}\label{APN-mf}
	If a vBf $F:\mathbb{F}^n\rightarrow \mathbb{F}^n$ is a quadratic or cubic APN then there is a nonzero  $\lambda\in \mathbb{F}^n$ such that $\mathcal{M}(F_\lambda)\leq 2^n$.
\end{corollary}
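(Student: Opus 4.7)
The plan is to deduce the corollary as a one-line averaging consequence of Theorem \ref{mf-cubic-quadratic-APN}. Recall that $\mathcal{M}(F)$ was defined as $\sum_{\lambda \in \mathbb{F}^n \setminus \{0\}} \mathcal{M}(F_\lambda)$, a sum of exactly $2^n-1$ summands indexed by the nonzero components of $F$. Assuming $F$ is quadratic or cubic and APN, Theorem \ref{mf-cubic-quadratic-APN} tells us that this sum equals precisely $2^n(2^n-1)$, so the average value of a summand is exactly $2^n$.

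From here I would apply the pigeonhole/averaging principle: if every nonzero $\lambda \in \mathbb{F}^n$ satisfied $\mathcal{M}(F_\lambda) > 2^n$, then summing over the $2^n-1$ nonzero components would give $\mathcal{M}(F) > 2^n(2^n-1)$, contradicting the equality from Theorem \ref{mf-cubic-quadratic-APN}. Hence there must exist at least one $\lambda \neq 0$ with $\mathcal{M}(F_\lambda) \leq 2^n$.

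There is essentially no obstacle here; the whole content of the corollary is the observation that the APN tightness of Theorem \ref{mf-cubic-quadratic-APN} forces the per-component quantity $\mathcal{M}(F_\lambda)$ to be small on at least one component. The only point worth noting for cleanliness of exposition is that strict inequality in the pigeonhole argument yields the non-strict conclusion $\mathcal{M}(F_\lambda) \leq 2^n$ (with equality possible, e.g.\ if all $\mathcal{M}(F_\lambda)$ are equal to the average $2^n$), which matches the statement of the corollary.
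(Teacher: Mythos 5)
Your proof is correct and is exactly the argument the paper intends: the paper states only that the corollary follows from Theorem \ref{mf-cubic-quadratic-APN}, and the implicit reasoning is precisely your averaging/pigeonhole step applied to the equality $\mathcal{M}(F)=2^n(2^n-1)$ over the $2^n-1$ nonzero components.
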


By Proposition \ref{mf-partially-bent}, we can deduce that the following corollary holds.

\begin{corollary}\label{sum-mf-quadratic}
	Let $F:\mathbb{F}^n\rightarrow \mathbb{F}^n$ be a quadratic function or cubic partially-bent function. Then \begin{align}\label{mf-quadratic-eqn}\mathcal{M}(F)=2^n\sum_{\lambda\in\mathbb{F}^n\setminus\{0\}} (2^{\dim V(F_\lambda)}-1).\end{align} 
\end{corollary}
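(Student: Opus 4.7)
The plan is to reduce the claim to a componentwise application of Proposition \ref{mf-partially-bent}. By the definition of $\mathcal{M}(F)$ given at the start of this subsection, $\mathcal{M}(F) = \sum_{\lambda \neq 0} \mathcal{M}(F_\lambda)$, so it suffices to check that each component $F_\lambda$ (for $\lambda \neq 0$) falls under the hypotheses of that proposition and then to substitute the formula $\mathcal{M}(F_\lambda) = 2^n(2^{\dim V(F_\lambda)} - 1)$ into the sum. Factoring $2^n$ out of the resulting expression yields the claimed identity.

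To justify that Proposition \ref{mf-partially-bent} may be applied componentwise, I would split into the two cases of the corollary. If $F$ is quadratic, then every $F_\lambda$ has $\deg(F_\lambda) \leq 2$; in the subcase $\deg(F_\lambda) = 2$, Remark \ref{quadratic-partially-bent} guarantees $F_\lambda$ is partially-bent, so Proposition \ref{mf-partially-bent} applies directly. If $F$ is cubic partially-bent, I would read the hypothesis as asserting that each $F_\lambda$ is quadratic or cubic partially-bent, so again Proposition \ref{mf-partially-bent} applies without modification.

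The only real bookkeeping concerns the degenerate subcase in which $F_\lambda$ is affine (degree at most $1$), which can happen when $F$ is quadratic. Here $D_a F_\lambda$ is constant for every $a$, so $Z_a(F_\lambda) = \mathbb{F}^n$, $U_a(F_\lambda) = \varnothing$, and $\mathcal{M}_a(F_\lambda) = 2^n$ for every $a$. Summing over nonzero $a$ gives $\mathcal{M}(F_\lambda) = 2^n(2^n - 1)$, which matches $2^n(2^{\dim V(F_\lambda)} - 1)$ because $V(F_\lambda) = \mathbb{F}^n$ in this case. With this edge case absorbed, the corollary follows by a single line of substitution into $\sum_{\lambda \neq 0} \mathcal{M}(F_\lambda)$; I expect no substantive obstacle beyond that brief verification.
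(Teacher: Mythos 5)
Your proposal is correct and follows essentially the same route as the paper, which deduces the corollary by applying Proposition \ref{mf-partially-bent} to each component $F_\lambda$ and summing over $\lambda\neq 0$. Your explicit treatment of the degenerate affine components (where $V(F_\lambda)=\mathbb{F}^n$ and $\mathcal{M}(F_\lambda)=2^n(2^n-1)$) is a detail the paper leaves implicit, and it is handled correctly.
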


\begin{example}
	{\rm Let $F(x_1, x_2, x_3) = (f_1, f_2, f_3)$ where $f_1 = x_1x_3 + x_2x_3+x_1$, $f_2= x_2x_3+x_1 + x_2$ and $f_3= x_1x_2+x_1 + x_2+x_3$ are all in $B_3$. One can verify that all components are quadratic. By Corollary \ref{sum-mf-quadratic}, $\mathcal{M}(F)=2^3\cdot (2^3-1)=56$ and so, by Theorem \ref{mf-cubic-quadratic-APN}, we conclude that $F$ is an APN function. Moreover, all components are balanced, implying that $F$ is an APN permutation.}
\end{example} 

We deduce, from Lemma \ref{semi-bent-V(f)}, Corollary \ref{sum-mf-quadratic} and Theorem \ref{mf-cubic-quadratic-APN}, that the following corollary holds.

\begin{corollary}\label{quadratic-mf-APN-odd dimension}
	Let $F:\mathbb{F}^n\rightarrow \mathbb{F}^n$, with $n$ odd, be a quadratic function or cubic partially-bent function. Then $F$ is APN if and only if, for all $\lambda\neq 0 \in\mathbb{F}^n$, $\mathcal{M}(F_\lambda)=2^n$. 
\end{corollary}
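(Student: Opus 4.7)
The plan is to chain together Corollary~\ref{sum-mf-quadratic}, Proposition~\ref{mf-partially-bent}, and Theorem~\ref{mf-cubic-quadratic-APN}, using the parity constraint from $n$ being odd to pin down each $\dim V(F_\lambda)$ from below.

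First I would observe that, under the hypothesis on $F$, each component $F_\lambda$ with $\lambda\neq 0$ is itself quadratic or cubic partially-bent, so Proposition~\ref{mf-partially-bent} applies and yields the identity $\mathcal{M}(F_\lambda)=2^n\bigl(2^{\dim V(F_\lambda)}-1\bigr)$. Next, since a partially-bent function restricts to a bent function on any complement $U$ of its linear space $V(F_\lambda)$, and bent functions exist only in even dimension, the integer $\dim V(F_\lambda)$ must have the same parity as $n$. With $n$ odd, this forces $\dim V(F_\lambda)\geq 1$ for every nonzero $\lambda$; in the purely quadratic case this is exactly Lemma~\ref{semi-bent-V(f)}. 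Consequently $\mathcal{M}(F_\lambda)\geq 2^n(2^1-1)=2^n$, with equality holding if and only if $\dim V(F_\lambda)=1$.

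Summing this componentwise bound over the $2^n-1$ nonzero values of $\lambda$ recovers $\mathcal{M}(F)\geq 2^n(2^n-1)$, matching precisely the lower bound furnished by Theorem~\ref{mf-cubic-quadratic-APN}. Since each summand in $\mathcal{M}(F)=\sum_{\lambda\neq 0}\mathcal{M}(F_\lambda)$ is at least $2^n$, the total equals $2^n(2^n-1)$ if and only if every summand equals $2^n$. Combining this equality condition with the APN characterization in Theorem~\ref{mf-cubic-quadratic-APN} delivers the corollary.

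There is essentially no analytic obstacle, as the statement is a clean algebraic consequence of the formula in Corollary~\ref{sum-mf-quadratic} and the sharp bound in Theorem~\ref{mf-cubic-quadratic-APN}. The only subtlety worth flagging is the parity argument that secures $\dim V(F_\lambda)\geq 1$ for every $\lambda\neq 0$; without the hypothesis $n$ odd, some components could be bent (hence give $\mathcal{M}(F_\lambda)=0$) and both the inequality $\mathcal{M}(F_\lambda)\geq 2^n$ and the rigidity of the equality case would fail.
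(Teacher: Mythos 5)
Your proof is correct and follows essentially the same route the paper intends: the paper simply cites Lemma~\ref{semi-bent-V(f)}, Corollary~\ref{sum-mf-quadratic} and Theorem~\ref{mf-cubic-quadratic-APN}, and you fill in exactly that chain. Your parity argument showing $\dim V(F_\lambda)\geq 1$ for cubic partially-bent components (where Lemma~\ref{semi-bent-V(f)}, stated only for quadratics, does not literally apply) is a welcome bit of extra care.
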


By Theorem \ref{mf-cubic-quadratic-APN} and Corollary \ref{sum-mf-quadratic},  the following result holds. 

\begin{corollary}\label{sum-mf-quadratic-APN}
	Let $F:\mathbb{F}^n\rightarrow \mathbb{F}^n$ be a quadratic function or cubic partially-bent function. Then \begin{align}\label{mf-quadratic-eqn-1}\sum_{\lambda\neq 0\in\mathbb{F}^n} (2^{\dim V(F_\lambda)}-1)\geq 2^n-1.\end{align} Moreover, equality holds if and only if $F$ is APN. 
\end{corollary}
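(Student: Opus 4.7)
The plan is to combine the two immediately preceding results. By Corollary \ref{sum-mf-quadratic}, for any quadratic function or cubic partially-bent function $F:\mathbb{F}^n\to\mathbb{F}^n$ we have the explicit identity
\[
\mathcal{M}(F) \;=\; 2^n\sum_{\lambda\in\mathbb{F}^n\setminus\{0\}}\bigl(2^{\dim V(F_\lambda)}-1\bigr),
\]
so the quantity of interest in the statement is precisely $\mathcal{M}(F)/2^n$. The strategy is therefore to feed this identity into Theorem \ref{mf-cubic-quadratic-APN}, which says $\mathcal{M}(F)\geq 2^n(2^n-1)$ with equality if and only if $F$ is APN. Note that the hypothesis on $\deg(F)$ required by Theorem \ref{mf-cubic-quadratic-APN} (degree $2$ or $3$) is met: if $F$ is quadratic then $\deg(F)=2$, and if $F$ is cubic partially-bent then $\deg(F)=3$.

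First I would substitute the Corollary \ref{sum-mf-quadratic} formula for $\mathcal{M}(F)$ into the inequality of Theorem \ref{mf-cubic-quadratic-APN}. This yields
\[
2^n\sum_{\lambda\in\mathbb{F}^n\setminus\{0\}}\bigl(2^{\dim V(F_\lambda)}-1\bigr)\;\geq\;2^n(2^n-1).
\]
Dividing both sides by $2^n>0$ preserves the inequality and gives the desired bound.

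For the equality statement, I would simply note that dividing by a strictly positive constant is an equivalence, so equality in the displayed bound holds if and only if $\mathcal{M}(F)=2^n(2^n-1)$, which by the ``moreover'' part of Theorem \ref{mf-cubic-quadratic-APN} is equivalent to $F$ being APN. There is essentially no obstacle here: the work has already been done in Corollary \ref{sum-mf-quadratic} (which translates $\mathcal{M}(F)$ into a sum over linear-space dimensions of components via Proposition \ref{mf-partially-bent}) and in Theorem \ref{mf-cubic-quadratic-APN} (which identifies $\mathcal{M}(F)$ as the APN-characterising quantity for quadratic and cubic functions via Lemma \ref{momentum-lemma-mf-vectorial} and Theorem \ref{APN-momentum-vectorial}). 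The corollary is a purely formal rescaling of those two facts.
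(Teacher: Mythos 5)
Your proposal is correct and is exactly the paper's argument: the paper derives this corollary by combining Theorem \ref{mf-cubic-quadratic-APN} with Corollary \ref{sum-mf-quadratic}, which amounts to the same substitution and division by $2^n$ that you carry out explicitly.
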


By applying Lemma \ref{semi-bent-V(f)} and Corollary \ref{sum-mf-quadratic-APN}, we can deduce the only well-known result present in this subsection.

\begin{theorem}[\cite{Bud}]\label{quadratic-apn}
	Let $F:\mathbb{F}^n\rightarrow\mathbb{F}^n$, with $n$ odd, be a pure quadratic function. Then $F$ is APN if and only if it is AB. 
\end{theorem}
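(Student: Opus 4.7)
The plan is to combine the two ingredients flagged right before the statement: Lemma \ref{semi-bent-V(f)}, which controls $\dim V(f)$ for a quadratic $f$ in odd dimension, and Corollary \ref{sum-mf-quadratic-APN}, which characterises APN quadratic (or cubic partially-bent) vBf's through the sum $\sum_{\lambda\neq 0}(2^{\dim V(F_\lambda)}-1)$. Since $F$ is assumed pure quadratic, every component $F_\lambda$ (for $\lambda\neq 0$) is itself a quadratic Bf on $n$ odd variables, so Lemma \ref{semi-bent-V(f)} applies to each $F_\lambda$ individually.

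First I would note that by Lemma \ref{semi-bent-V(f)}, for each nonzero $\lambda\in\mathbb{F}^n$ one has $\dim V(F_\lambda)\geq 1$, hence $2^{\dim V(F_\lambda)}-1\geq 1$. Summing this termwise over the $2^n-1$ nonzero $\lambda$, the inequality \[\sum_{\lambda\neq 0\in\mathbb{F}^n}\bigl(2^{\dim V(F_\lambda)}-1\bigr)\geq 2^n-1\] is recovered with a transparent equality condition: equality in the sum holds if and only if every single summand equals $1$, i.e.\ $\dim V(F_\lambda)=1$ for every nonzero $\lambda$.

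Next I would interpret each side of this equality condition. On the one hand, by Corollary \ref{sum-mf-quadratic-APN} (which applies because $F$ is quadratic), equality in the displayed bound is equivalent to $F$ being APN. On the other hand, by the second half of Lemma \ref{semi-bent-V(f)}, $\dim V(F_\lambda)=1$ is equivalent to $F_\lambda$ being semi-bent; so the condition that every component has $\dim V(F_\lambda)=1$ is exactly the definition of $F$ being AB (all components semi-bent, $n$ odd). Chaining the two equivalences yields $F$ APN $\iff$ $F$ AB.

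I do not anticipate a real obstacle here; the statement is essentially a corollary in which the pure quadratic and odd-$n$ hypotheses are exactly what is needed to turn the general inequality of Corollary \ref{sum-mf-quadratic-APN} into an equality-by-equality comparison, using the sharp lower bound $\dim V(F_\lambda)\geq 1$ from Lemma \ref{semi-bent-V(f)}. The only thing to be a bit careful about is confirming that Corollary \ref{sum-mf-quadratic-APN} indeed applies to a pure quadratic $F$ (it does, since pure quadratic is a special case of the quadratic hypothesis), and that the equality analysis is carried out termwise rather than merely globally.
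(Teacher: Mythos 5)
Your proof is correct and follows exactly the route the paper indicates (the paper omits the details, simply stating that the result follows from Lemma \ref{semi-bent-V(f)} and Corollary \ref{sum-mf-quadratic-APN}). Your termwise equality analysis, using the pure quadratic hypothesis to apply Lemma \ref{semi-bent-V(f)} to each component, is precisely the intended argument.
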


For any partially-bent function $f$ in even dimension, $\dim V(f)$ must be even and $\dim V(f)=0$ if and only if $f$ is bent. So we deduce, from Corollary \ref{sum-mf-quadratic-APN}, that a quadratic function or cubic partially-bent APN function  $F:\mathbb{F}^n\rightarrow \mathbb{F}^n$ must have $2(2^n-1)/3$ bent components if the linear spaces for all components have dimensions $0$ or $2$. Moreover, if there is a component with dimension $2\ell$, $\ell>1$, then the number of bent components has to be increased by $(2^{2\ell}-1)/3-1$ in order for equality of Relation \eqref{mf-quadratic-eqn-1} to hold. Since, in the case of $n=4$, the dimension of linear space of any quadratic function is either $0$ or $2$, then we deduce the following.

\begin{proposition}
	A pure quadratic function $Q:\mathbb{F}^4\rightarrow \mathbb{F}^4$ is APN if and only if there are $10$ bent components.
\end{proposition}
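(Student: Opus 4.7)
The plan is to apply Corollary \ref{sum-mf-quadratic-APN} directly, observing that when $n=4$ the possible values of $\dim V(F_\lambda)$ collapse to only two options, so the APN condition reduces to a simple counting equation.

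First, since $Q$ is pure quadratic, every component $Q_\lambda$ with $\lambda\in\mathbb{F}^4\setminus\{0\}$ is a quadratic Boolean function; by Remark \ref{quadratic-partially-bent} each $Q_\lambda$ is partially-bent, so Corollary \ref{sum-mf-quadratic-APN} applies and gives
\[
\sum_{\lambda\in\mathbb{F}^4\setminus\{0\}}\bigl(2^{\dim V(Q_\lambda)}-1\bigr)\ \ge\ 2^4-1\ =\ 15,
\]
with equality if and only if $Q$ is APN.

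Next I would pin down the possible values of $\dim V(Q_\lambda)$. By Remark \ref{quadratic-partially-bent}, since $n=4$ is even, $\dim V(Q_\lambda)$ is even; since $Q_\lambda$ is genuinely quadratic it cannot be affine, so $\dim V(Q_\lambda)\ne 4$. Hence $\dim V(Q_\lambda)\in\{0,2\}$, and by the same remark $\dim V(Q_\lambda)=0$ is equivalent to $Q_\lambda$ being bent.

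Now let $b$ denote the number of bent components and $r$ the number of components with $\dim V(Q_\lambda)=2$. Since there are $2^4-1=15$ nonzero values of $\lambda$, we have $b+r=15$. The summands in the displayed inequality contribute $0$ for each bent component and $2^2-1=3$ for each component with $\dim V(Q_\lambda)=2$, so the inequality reads
\[
3r\ \ge\ 15,
\]
with equality iff $Q$ is APN. Thus $Q$ is APN iff $r=5$ iff $b=15-5=10$, which is exactly the statement. There is no real obstacle here; the only thing to be careful about is ruling out $\dim V(Q_\lambda)=4$, which holds because each component is truly quadratic rather than affine — a direct consequence of $Q$ being pure quadratic.
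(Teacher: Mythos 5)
Your proof is correct and follows essentially the same route as the paper: the paper also deduces the proposition from Corollary \ref{sum-mf-quadratic-APN} together with the observation that for $n=4$ every (genuinely) quadratic component has $\dim V(Q_\lambda)\in\{0,2\}$, so equality forces $2(2^4-1)/3=10$ bent components. Your explicit counting with $b+r=15$ and $3r\ge 15$ just spells out the arithmetic the paper leaves implicit.
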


By Equation \eqref{momentum-fourier-equation} in Lemma \ref{momentum-lemma-mf-vectorial}, for any vBf $F:\mathbb{F}^n\rightarrow\mathbb{F}^n$, we have \begin{align}\label{fourier-second-derivatives-APN-eqn} L_4(F)=2^n\sum_{\lambda\neq 0,c,b\in\mathbb{F}^n}\mathcal{F}(D_bD_cF_\lambda).\end{align}
So, by Theorem \ref{APN-momentum-vectorial} and Equation \eqref{fourier-second-derivatives-APN-eqn}, we deduce the following result which relates an APN function to its second order derivatives (this result can also be directly deduced  from Theorem~\ref{APN-first-order-derivatives}).

\begin{theorem}\label{APN-second-order-derivatives}
	Let $F:\mathbb{F}^n\rightarrow \mathbb{F}^n$ be a vBf. Then \[\sum_{\lambda\neq 0,b,c\in\mathbb{F}^n}\mathcal{F}(D_bD_cF_\lambda)\geq 2^{2n+1}(2^n-1).\] Moreover, $F$ is APN if and only if equality holds. 
\end{theorem}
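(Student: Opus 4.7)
The plan is to reduce the statement to Theorem \ref{APN-momentum-vectorial} (or equivalently to Theorem \ref{APN-first-order-derivatives}) via the identity stated just before, namely Equation \eqref{fourier-second-derivatives-APN-eqn}: $L_4(F) = 2^n \sum_{\lambda\neq 0,\, b,c\in \mathbb{F}^n} \mathcal{F}(D_bD_cF_\lambda)$. Once this identity is secured for an arbitrary vBf $F$, dividing the bound $L_4(F)\geq 2^{3n+1}(2^n-1)$ from Theorem \ref{APN-momentum-vectorial} by $2^n$ yields the desired inequality, and the APN characterization transfers verbatim.

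The main step is therefore to establish Equation \eqref{fourier-second-derivatives-APN-eqn} \emph{without} the degree hypothesis $\deg(F)\in\{2,3\}$ that was in force in Lemma \ref{momentum-lemma-mf-vectorial}. Inspecting that computation, the chain of manipulations from $L_4(F)=\sum_{\lambda\neq 0}\sum_{a\in\mathbb{F}^n}\mathcal{W}^4_{F_\lambda}(a)$ up to
\[
L_4(F)=2^n\sum_{\lambda\neq 0\in\mathbb{F}^n}\sum_{x,b,c\in \mathbb{F}^n}(-1)^{D_cD_bF_\lambda(x)}
\]
uses only expansion, reordering of sums, the change of variables $y=x+b$, $z=x+c$, and the standard identity $\sum_{a\in\mathbb{F}^n}(-1)^{a\cdot t}=2^n$ if $t=0$ and $0$ otherwise. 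None of these steps rests on any assumption on $\deg(F)$; the degree hypothesis enters only later in that proof, to exploit the fact that $D_cD_bF_\lambda$ has degree $\le 1$. Since the inner sum over $x$ is precisely $\mathcal{F}(D_cD_bF_\lambda)$, I would rewrite it in this form and obtain \eqref{fourier-second-derivatives-APN-eqn} verbatim for any vBf $F$.

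With this identity in hand, I would invoke Theorem \ref{APN-momentum-vectorial} to conclude that $2^n\sum_{\lambda\neq 0,\,b,c}\mathcal{F}(D_bD_cF_\lambda) = L_4(F) \geq 2^{3n+1}(2^n-1)$, with equality precisely when $F$ is APN. A cleaner alternative, flagged parenthetically in the statement, is to start from Theorem \ref{APN-first-order-derivatives}: expanding $\mathcal{F}^2(D_aF_\lambda) = \sum_{x,y\in\mathbb{F}^n}(-1)^{D_aF_\lambda(x)+D_aF_\lambda(y)}$ and substituting $y = x+b$ gives $\mathcal{F}^2(D_aF_\lambda)=\sum_{b\in\mathbb{F}^n}\mathcal{F}(D_bD_aF_\lambda)$, after which summing over $\lambda\neq 0$ and $a$ and applying Theorem \ref{APN-first-order-derivatives} delivers the bound and the APN characterization in one stroke. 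The only real obstacle is bookkeeping: ensuring that the degree-free portion of the derivation of \eqref{fourier-second-derivatives-APN-eqn} is visibly separated from the degree-dependent simplifications that follow it in the proof of Lemma \ref{momentum-lemma-mf-vectorial}, so that the present theorem applies to \emph{all} vBf's and not merely the quadratic and cubic ones.
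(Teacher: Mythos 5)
Your proposal is correct and matches the paper's own argument: the paper derives Equation \eqref{fourier-second-derivatives-APN-eqn} for an arbitrary vBf from the degree-free portion of the computation in Lemma \ref{momentum-lemma-mf-vectorial} and then applies Theorem \ref{APN-momentum-vectorial}, exactly as you do. The paper likewise flags the alternative route via Theorem \ref{APN-first-order-derivatives} only parenthetically, so there is nothing further to reconcile.
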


\section{Conclusion}
In this paper, we proved some results about the weight, balancedness and nonlinearity of some splitting functions and a special class of cubic functions. We also proved some results on how the weight and nonlinearity of any Boolean function can be, respectively, related to the weights and nonlinearity of some other functions at a lower dimension. Furthermore, we introduced a parameter of a Boolean function, based on second-order derivatives, from which we derived a characterization of quadratic and cubic APN functions.

\section*{Acknowledgements} The results in this paper appear partially in the last author's MSc thesis and mostly in the first author's PhD thesis, both supervised by the second author.


\begin{thebibliography}{8}
\bibitem{Ber} Berger, T.P., Canteaut, A., Charpin, P., Laigle-Chapuy. Y.: On almost perfect nonlinear functions over $\mathbb{F}_2^n$. IEEE Trans. Inf. Theory \textbf{52}(9), 4160-4170 (2006).

\bibitem{Beth} Beth, T., Ding, C.: On almost perfect nonlinear permutations. In: Advances in Cryptology - EUROCRYPT '93, vol 765, pp 65-76. Springer, Berlin, Heidelberg (1993).

\bibitem{Bra} Braeken, A., Borissov, Y., Nikova, S., Preneel B.: Classification of cubic $(n-4)$-resilient Boolean functions. IEEE Transactions on Information Theory \textbf{52}(4), 1670-1676 (2006).

\bibitem{Bud} BudaghyanL., Helleseth T., Li N. and Sun B., Some Results on the Known Classes of Quadratic APN Functions. In: El Hajji S., Nitaj A., Souidi E. (eds) Codes, {\em Cryptology and Information Security, C2SI 2017}. Lecture Notes in Computer Science, Springer, Cham, vol 10194 (2017), 3-16.

\bibitem{Cal} Calderini, M., Sala, M., Villa I.: A note on APN permutations in even dimension, Finite Fields and Their Applications, \textbf{46}, 1-6 (2017).

\bibitem{Cant} Canteaut, A.: Cryptographic Functions and Design Criteria for Block Ciphers. In: Rangan C.P., Ding C. (eds) {\em Progress in Cryptology - INDOCRYPT 2001}. INDOCRYPT 2001. Lecture Notes in Computer Science, vol 2247, (2001), pp 1-16. Springer, Berlin, Heidelberg.


\bibitem{Car1} Carlet, C.: Vectorial Boolean Functions for Cryptography. In: Crama, Y., Peter, L. Hammer, P.-L. (eds.), {\em Boolean models and methods in mathematics, computer science and engineering}., vol 2, pp 398-470 Cambridge Univ. Press, Cambridge (2010).

\bibitem{Car4} Carlet C.: A transformation on boolean functions, its consequences on some problems related to Reed-Muller codes. In: Cohen G., Charpin P. (eds.) Adv. in crypt.-Eurocrypt'90. LNCS, vol 473, pp 42-50. Springer, Berlin, Heidelberg (1991).

\bibitem{Chee} Chee, S., Lee, S., Kim K.: Semi-bent Functions. In: Pieprzyk, J., Safavi-Naini, R. (eds.) Advances in Cryptology-ASIACRYPT'94. Proc. 4th Int. Conf. on the Theory and Applications of Cryptology, vol 917, pp 107-118. Springer,  Wollongong.(1994).

\bibitem{Cus} Cusick, T. W., Stanica, P.: Cryptographic Boolean Functions and Applications. Academic Press, London (2009).

\bibitem{Mac} MacWilliams, F.-J., Sloane, N.-J.-A..: The Theory of Error-Correcting Codes. Elsevier, New York (1977).

\bibitem{Tang} Tang D., Zhang W., Tang X.: Construction of balanced Boolean functions with high nonlinearity and good autocorrelation properties. {\em Designs, Codes and Cryptography}, 60, (2010), 77-91.

\bibitem{Wu} Wu, C., Feng, D.: Boolean Functions and Their Applications in Cryptography. Springer, New York (2016).
\end{thebibliography}
\end{document}